\documentclass[final,3p,times]{elsarticle}
\label{package}
\usepackage[numbers]{natbib}
\usepackage{stfloats}  
\usepackage{float}  
\usepackage{multirow}
\usepackage{graphicx}
\usepackage{epstopdf}
\usepackage{subfigure}
\usepackage{array}
\usepackage{soul, color, xcolor}
\usepackage{changes}
\usepackage{algorithm} 	
\usepackage[noend]{algpseudocode}
\usepackage{ntheorem}	
\usepackage{booktabs}  
\usepackage{amsmath}
\usepackage{caption}

\usepackage{placeins}
\renewcommand{\algorithmicrequire}{\textbf{Input:}}
\renewcommand{\algorithmicensure}{\textbf{Output:}}

\usepackage{amssymb}
\usepackage{ragged2e}
\usepackage{placeins}
\usepackage[colorlinks,linkcolor=blue,citecolor=blue]{hyperref}
\usepackage{float}  
\usepackage{lipsum}
\usepackage{microtype}


\captionsetup[figure]{labelfont={bf},labelformat={default},labelsep=period,name={Fig.}}

\theoremseparator{.}
\newtheorem{example}{Example}[section]  

\newtheorem*{proof}{Proof}
\newtheorem{lemma}{Lemma}[section]

\newtheorem{definition}{Definition}[section]

\makeatletter
\def\ps@pprintTitle{%
	\let\@oddhead\@empty
	\let\@evenhead\@empty
	\def\@oddfoot{}%
	\let\@evenfoot\@oddfoot
}
\makeatother

\begin{document}

\begin{frontmatter}

\title{Topology-Aware Subset Repair via Entropy-Guided Density and Graph Decomposition}

\author[hit]{Guoqi~Zhao}
\ead{24s030123@stu.hit.edu.cn}

\author[hit]{Xixian~Han}
\ead{hanxx@hit.edu.cn}

\author[hit]{Xiaolong~Wan\corref{cor1}}
\cortext[cor1]{Corresponding author}
\ead{wxl@hit.edu.cn}

\affiliation[hit]{organization={School of Computer Science and Technology},
            addressline={Harbin Institute of Technology}, 
            city={Harbin},
            postcode={150000}, 
            state={Heilongjiang},
            country={China}}

\begin{abstract}
Subset repair is an important data cleaning technique that enforces integrity constraints by deleting a minimal number of conflicting tuples, yet multiple minimal repairs often exist. Density-based methods address this ambiguity by favoring repairs that preserve dense, high-quality data regions; however, their effectiveness is limited by density bias from dirty clusters, high computational cost, and uniform attribute weighting.
We propose a topology-aware approximate subset repair framework based on a joint density–conflict penalty model. The framework integrates three key components. First, a two-layer conflict detection strategy combines attribute inverted indexes with CFD rule grouping to efficiently identify violations. Second, we introduce EntroCFDensity, a density metric that incorporates information entropy and CFD weights to dynamically adjust attribute importance and reduce homogeneity bias. Third, a conflict degree measure is defined to complement local density, enabling a topology-adaptive penalty mechanism with dynamic weight allocation guided by the coefficient of variation. The conflict graph is further decomposed into independent subgraphs, transforming global repair into tractable local subproblems.
Based on this framework, we develop two algorithms: PPIS, a scalable heuristic, and MICO, a mixed-integer programming method with theoretical guarantees. Experimental results show that our approach improves repair accuracy and robustness while effectively preserving high-quality data.
\end{abstract}

\begin{keyword}
Data cleaning\sep Subset repair\sep Data repair\sep Error discovery
\PACS 0000 \sep 1111
\MSC 0000 \sep 1111
\end{keyword}

\end{frontmatter}


\section{Introduction}\label{sec:introduction}

In real-world databases, violations of integrity constraints (such as functional dependencies) are common, leading to data redundancy, logical errors, and ultimately undermining the reliability of data-driven decisions. Integrity constraints formally characterize such inconsistencies and enable systematic detection of tuples that conflict with the intended data semantics~\cite{DBLP:journals/pvldb/MiaoCLGL20}. Within this context, subset repair (S-repair) has become a classic framework for database repair, resolving inconsistencies by removing tuples so that the remaining subset satisfies all given constraints~\cite{DBLP:journals/pvldb/LiuSGGKR24}. By addressing conflicts with minimal deletion cost, S-repair preserves as much useful information as possible while maintaining semantic consistency. This balance makes S-repair essential in data cleaning, as it effectively restores consistency while minimizing information loss, thereby enhancing the usability and trustworthiness of the cleaned data~\cite{DBLP:journals/tkde/SunSY24,DBLP:journals/tods/LivshitsKR20}.

Specifically, if a database instance $I$ violates a given set of integrity constraints, it is considered inconsistent. A common type of integrity constraint is the functional dependency (FD)~\cite{DBLP:conf/ifip/Armstrong74}. Given a relation schema with attribute set $U$, an FD is expressed as $\varphi: X \rightarrow Y$ where $X, Y \subseteq U$. It requires that for any two tuples $t_1, t_2$ in $I$, if $t_1[X] = t_2[X]$, then $t_1[Y] = t_2[Y]$ must hold. As an extension, the Conditional Functional Dependency (CFD)~\cite{DBLP:conf/icde/BohannonFGJK07} enhances the semantics of FDs by introducing a pattern tableau that binds specific attribute values. This allows CFDs to express both general functional dependencies and value-specific semantic constraints, thereby capturing more granular data rules that are common in real-world applications. For a relation instance $I$ that violates a set of constraints $\Sigma$, the goal of a subset repair (S-repair) is to remove a set of tuples $I_n \subseteq I$ such that the remaining instance $I_S = I - I_n$ satisfies $\Sigma$. The repair aims to achieve this with minimal deletion cost—often interpreted as removing the fewest tuples or minimizing a given cost function over the tuples. 

\begin{table}[thb]
\centering
\caption{employee dataset}
\small 
\begin{tabular}{|c|c|c|c|c|c|}
\hline
Id & Work experience & Salary & Position & Allowance & Data labeling \\
\hline
$t_{0}$ & 1 & 6500 & manage     & 1000 & clean \\
$t_{1}$ & 1 & 6500 & manage     & 1000 & clean \\
$t_{2}$ & 1 & 6600 & manage     & 1000 & dirty \\
$t_{3}$ & 1 & 6600 & operate    &  600 & dirty \\
$t_{4}$ & 3 & 7500 & technology &  800 & clean \\
$t_{5}$ & 3 & 7500 & technology &  800 & clean \\
$t_{6}$ & 3 & 8000 & operate    &  800 & dirty \\
$t_{7}$ & 3 & 8000 & operate    &  800 & dirty \\
$t_{8}$ & 3 & 8000 & operate    &  700 & dirty \\
$t_{9}$ & 3 & 8000 & operate    & 1000 & dirty \\
\hline
\end{tabular}
\label{employee dataset}
\end{table}

\begin{example}
\normalfont
Consider the example dataset in Table~\ref{employee dataset}, under the constraints of the functional dependency set $\Sigma = \{ FD_1: \text{Work experience} \rightarrow \text{Salary}, FD_2: \text{Position} \rightarrow \text{Allowance} \}$, every tuple violates at least one dependency, meaning that each tuple is conflicting. To restore consistency, one possible solution is to remove the tuple set $I_n = \{ t_{2}, t_{3}, t_{6}, t_{7}, t_{8}, t_{9} \}$, after which the remaining subset $\{ t_{0}, t_{1}, t_{4}, t_{5} \}$ satisfies all constraints in $\Sigma$. Thus, $I_n$ is a valid minimal removal set. However, this solution is not unique. Other removal sets (such as $\{ t_{2}, t_{3}, t_{4}, t_{5}, t_{8}, t_{9} \}$ or $\{ t_{0}, t_{1}, t_{6}, t_{7}, t_{8}, t_{9} \}$) also yield a consistent subset. These observations highlight two key challenges in subset repair: (1) the minimal removal set is generally not unique, often leading to a large set of candidate repairs; and (2) it is difficult to ensure that the selected repair best preserves high-quality data, since the "best" choice depends on the cost metric tailored to downstream tasks~\cite{DBLP:journals/pvldb/MiaoCLGL20}.
\end{example}

In terms of the repair objectives, current subset repair approaches can be primarily categorized into two types: (1) Minimum-cardinality subset repair~\cite{DBLP:journals/iandc/ChomickiM05,DBLP:journals/tods/LivshitsKR20} , which aims to remove the fewest number of tuples such that all constraints are satisfied. Although this approach helps avoid excessive data loss, it ignores differences in data quality and often yields multiple valid solutions due to the combinatorial nature of conflict resolution. For these reasons, this type is not considered further in our study. (2) Minimum-weight subset repair, where each tuple is assigned a weight representing its importance or reliability—such as cost~\cite{DBLP:journals/pvldb/LiuSGGKR24,DBLP:journals/pvldb/MiaoCLGL20,DBLP:journals/vldb/MiaoZLWC23,DBLP:journals/tods/LivshitsKR20} , probability~\cite{DBLP:conf/apweb/ZhangHZLX23} , or density~\cite{DBLP:journals/tkde/SunSY24} . The goal is to minimize the total weight of deleted tuples while ensuring constraint satisfaction. This allows higher-value tuples to be preserved and can better reflect real-world priorities. However, the effectiveness of this approach heavily depends on how weights are defined, and the optimization problem generally becomes computationally harder. Importantly, ~\cite{DBLP:journals/tods/LivshitsKR20} established a dichotomy theorem showing that the S-repair problem is solvable in polynomial time only under specific conditions—such as when attributes share a common left-hand side, a consensus functional dependency exists, or the LHS marriage condition holds. In general, the problem is NP-hard. As a result, practical methods must rely on heuristics or approximation algorithms. A notable instance of the minimum-weight S-repair framework is density-based S-repair, but it also inherits the limitations of weight-based methods. 

Intuitively, we refer to tuples whose values are correct with respect to the (unknown) ground truth as \emph{clean tuples}. ~\cite{DBLP:journals/tkde/SunSY24}  use $k$-nearest-neighbor density as a quantitative measure of tuple reliability and select a subset repair that maximizes the overall density of the remaining database. By transforming the problem into one of density maximization, it extends the traditional minimum-weight deletion concept and aims to more accurately identify true errors. However, despite this innovation, existing density-based, and more broadly, weight-based S-repair methods still face several critical limitations: (1) Homogenization of attribute weights. The density function implicitly treats all attributes as equally important, ignoring differences in their frequency and informativeness in CFD rules. As a result, attributes that carry more semantic constraint information do not receive greater influence in density estimation. This motivates the need for an importance-aware density measure that can automatically adjust attribute weights. (2) Density bias under clustered dirty data. When erroneous tuples form local clusters, their neighborhood density may exceed that of clean tuples, causing the repair algorithm to misinterpret clustered noise as "high-density, reliable" data. This reveals the weakness of relying solely on local density, which fails to capture conflicts in the global topology. This motivates introducing a global conflict-topology metric to mitigate such density distortions. (3) High computational cost. The conflict detection phase requires a time complexity of $O(n^2 \times r)$ (where $n$ is the tuple cardinality and $r$ is the average number of rule checks), severely limiting its scalability on large-scale datasets. This bottleneck calls for more efficient conflict detection and conflict-graph decomposition techniques to enable large-scale repair.

To address the limitations of existing S-repair methods, we propose a topology-aware approximate S-repair framework centered on a density–conflict degree penalty score. The framework integrates three key ideas: a reliability-aware density measure, a global conflict-aware correction term, and an adaptive weighting mechanism. First, to mitigate attribute-weight homogenization in traditional density estimation, we introduce EntroCFDensity, a new tuple reliability metric that integrates attribute entropy with attribute frequencies in CFD constraints. Second, since density alone is vulnerable to dirty-data clustering, we incorporate the conflict degree, defined as the number of tuples violating CFDs with a given tuple, which captures global conflict interactions beyond local neighborhoods. Third, by using the coefficient of variation to characterize the distribution of density and conflict degree within each component, the framework adaptively produces topology-aware weights and constructs the Density–Conflict Degree Penalty Score. This penalty score provides systematic bias correction: tuples with low density but also low conflict degree are protected from erroneous deletion, whereas tuples with both high density and high conflict degree receive amplified penalties and are prioritized for removal. To support large-scale repair, we further design an efficient conflict detection strategy based on inverted indexes and rule grouping. By reducing the repair task to finding a minimum vertex cover with minimal penalty score in the conflict graph, the problem is NP-hard. Based on this framework, we design two algorithms: PPIS, an efficient and fast heuristic method, and MICO, a mixed-integer programming approach with theoretical guarantees.

The contributions of this paper are summarized as follows.
\begin{itemize}
	\item We propose a topology-aware density–conflict penalty model that enables cost-minimal subset repair under CFDs.
	\item We develop an efficient conflict detection and conflict-graph construction method, and formulate the repair task as a penalty-aware minimum vertex cover problem.
	\item We establish a penalty-driven approximate subset repair framework and design two algorithms, PPIS and MICO, to balance efficiency and optimality.
	\item We conduct extensive experiments demonstrating the advantages of the proposed methods in both repair quality and computational efficiency.
\end{itemize}

The structure of this paper is organized as follows.
Section 2 reviews related work in data cleaning. 
Section 3 introduces the data model and constraint definitions. 
Section 4 describes the conflict detection strategy, conflict-graph construction, and the proposed density–conflict penalty mechanism. 
Section 5 presents the overall repair framework and the two algorithms, PPIS and MICO. 
Section 6 evaluates the methods experimentally. 
Section 7 concludes the paper and discusses future work. 

\section{Related Work}\label{sec:relatedWork}
Dirty data has long been recognized as a key challenge in data management and analysis. Real-world data is often affected by quality issues, and dirty data may lead to incorrect clustering results or misidentification of patterns and relationships within the data ~\cite{DBLP:journals/tkde/LiangSSL25}. Studies have shown that data-driven decisions based on noisy, missing, or erroneous data can sometimes be even less reliable than those made without data ~\cite{DBLP:journals/pvldb/BaoBFLLLLLLLOSTWWWXZZZZ24}. Consequently, data cleaning has become particularly critical. Researchers in both database (DB) and machine learning (ML) communities have paid significant attention to data cleaning issues and actively explored various related research topics~\cite{DBLP:conf/icde/LiRBZCZ21}.

The data cleaning process generally consists of error detection and error repair~\cite{DBLP:conf/icde/LiRBZCZ21}. To achieve these objectives, researchers have proposed various types of data quality rules and conducted systematic studies on improving both the effectiveness and efficiency of data cleaning algorithms. Existing approaches can be broadly categorized into three classes: rule-based data cleaning, ML (Machine Learning)–based data cleaning, and statistical probability-based data cleaning.

\subsection{Rule-based data cleaning}
The basic idea of rule-based data cleaning is to ensure database consistency through a predefined set of constraints, thereby achieving data consistency and improved data quality ~\cite{DBLP:journals/pvldb/GeertsMPS13}. Data quality rules (e.g., integrity constraints) provide a declarative means of specifying valid or correct data instances, and any data that violates these rules is regarded as an error, or a violation~\cite{DBLP:journals/ftdb/IlyasC15}. Rule-based data cleaning involves correcting inconsistent data by leveraging these rules or constraints~\cite{DBLP:journals/pvldb/GeertsMPS13}.

Existing rule-based data repair techniques focus on computing repairs, with the goal of minimizing modifications to the data instance while satisfying a set of rules.  Representative rules include functional dependencies (FDs) ~\cite{DBLP:conf/sigmod/DallachiesaEEEIOT13,DBLP:journals/tkde/Hao0LHTF17,DBLP:journals/pvldb/PenaAN21}, conditional functional dependencies (CFDs) ~\cite{DBLP:journals/tkde/SunSY24}, relaxed functional dependencies (RFDs) ~\cite{DBLP:conf/edbt/BreveCDP22}, denial constraints ~\cite{DBLP:journals/pvldb/PenaAN21}, and repair rules  ~\cite{DBLP:conf/sigmod/WangT14}.

\textbf{Minimum-cardinality subset repair. }~\cite{DBLP:journals/iandc/ChomickiM05} conducted a systematic study on subset repair with minimal cardinality, providing a detailed analysis of the computational complexity of repair checking and consistent query answering. The work established clear boundaries between PTIME and co-NP-hard (and beyond) for different classes of constraints. It further introduces the Conflict Hypergraph to model tuple-level conflicts, proving that database repairs correspond bijectively to maximal independent sets of this hypergraph, thereby reducing the repair problem to a classically hard problem in graph theory. ~\cite{DBLP:journals/tods/LivshitsKR20} propose OptSRepair, which transforms the search for an optimal subset repair into an equivalent problem of finding a minimum-weight vertex cover in the conflict graph. The method  provides exact polynomial-time solutions for specific types of FD sets. Additionally, the study proved that for other FD sets, even finding a cardinality-minimal subset repair—a special case where all weights are 1—is APX-complete. The results fully cover the important special case of minimum-cardinality repair.

However, minimum-cardinality subset repair suffers from two main limitations. First, there may exist multiple repair solutions without a clear criterion for determining the optimal one. Second, it does not incorporate tuple weights or priority information, instead assuming that all tuples have equal weight. In practical scenarios, it may be more reasonable to remove one low-weight tuple rather than remove a single high-weight tuple. Therefore, solely pursuing minimum-cardinality may fail to capture actual needs. It is necessary to consider minimum-weight repair, which integrates the relative value differences among tuples into the repair process.

\textbf{Minimum-weight subset repair. }~\cite{DBLP:journals/pvldb/LiuSGGKR24} investigated the cost issue of introducing Representation Constraints (RCs) in S-repair. It formally defined the concept of a Representative Subset Repair (RS-repair), which requires satisfying Functional Dependencies (FDs) while ensuring the proportion of sensitive attributes adheres to predefined constraints. The cost is defined as the cost of representation, quantified by the deletion overhead—i.e., the proportion of additional tuples that must be deleted in an optimal RS-repair compared to an optimal S-repair. RS-repair helps maintain group representativeness, addressing the problem where traditional repair methods may exacerbate data bias. ~\cite{DBLP:journals/pvldb/MiaoCLGL20} studied the computational problem of finding constraint-based Optimal Subset Repairs (OPTSR), where the repair cost is defined as the weighted sum of deleted tuples. Here, the weights can represent tuple confidence, query contribution, or benefit, thereby quantifying the repair cost. The work extensively explored the application of OPTSR in multiple scenarios: assessing the degree of inconsistency in data repair, supporting the computation of upper and lower bounds for aggregate queries in Consistent Query Answering, and guiding decision-making in optimization problems. The authors proposed tighter inapproximability lower bounds, proving that OPTSR is often hard to approximate within a factor of 17/16, and developed three approximation algorithms, among which the TE-LP algorithm achieves a ratio better than 2. Subsequent research ~\cite{DBLP:journals/vldb/MiaoZLWC23} has established tighter inapproximability bounds for OPTSR, such as 143/136 for most cases.

From the perspective of error detection, ~\cite{DBLP:conf/icde/SunSW020} points out that tuples containing misplaced attribute values typically exhibit significant numerical differences, that is, greater distances from other tuples. This anomalous distance (or similarity) serves as a crucial signal for identifying potential misplacement errors. Building on the findings, ~\cite{DBLP:journals/tkde/SunSY24} introduces density as an optimization metric for subset repair. Density is defined as the sum of the reciprocal distances between a tuple and its k nearest neighbors, and is used to characterize the cleanliness of tuples. Based on the observation that cleaner tuples generally exhibit higher density, the subset repair problem is reformulated as a density maximization problem. The objective is to find a removal set such that the remaining instance achieves maximum density while satisfying all given constraints. By modeling the set of conflicting tuples as a conflict graph, the density maximization problem is reduced to a weighted vertex cover problem, whose NP-hardness is established. Efficient heuristic algorithms and the RELAXATION algorithm with performance guarantees are then developed, providing approximation solutions and error bounds for the NP-hard S-repair optimization problem.

However, this density-based S-repair also has limitations. First, it assumes homogeneous attribute weights, treating all attributes as equally important and neglecting the heterogeneity in their informational value. Second, its core assumption that clean tuples exhibit higher density within their neighborhoods, may not always hold in real-world data, as aggregated dirty data can lead to biased density estimates. Additionally, the method presents a significant computational challenge, as the conflict detection stage entails a substantial computational cost.

\textbf{Update repair (U-repair).} The rule-based U-repair model removes violations by updating the values of tuples or cells. Holistic ~\cite{DBLP:conf/icde/ChuIP13} encodes constraint violations as a Conflict Hypergraph, employs a Minimum Vertex Cover (MVC) to identify potentially erroneous cells, and constructs a Repair Context (RC) to collect candidate repair expressions; finally, it employs an iterative algorithm to dynamically update cell values for data repair. ~\cite{DBLP:journals/tkde/KoehlerL22} proposed a qualitative cleaning model based on possibility theory, reformulating the problem as a possibilistic vertex cover task and providing a fixed-parameter tractable algorithm. This approach achieves non-intrusive update repairs and effectively addresses constraint repair challenges under uncertain data. Horizon ~\cite{DBLP:journals/pvldb/RezigOAEMS21} constructs FD pattern graphs to encode value combination patterns within the data and selects repairs based on pattern frequency rather than the minimal change principle. However, as the underlying data may not necessarily contain erroneous values, this approach can lead to over-repair.

Despite the respective advantages of the aforementioned methods, U-repair still suffers from three inherent limitations: (1) It requires generating new cell values to resolve conflicts, yet these values often lack sufficient semantic justification, potentially introducing unverifiable or even new errors. (2) The candidate update space for U-repair typically grows exponentially, posing a significant computational burden for the repair search process, especially under complex constraints or high data noise scenarios. (3) Update operations may trigger chain reactions, where modifying one cell affects multiple constraints, thereby exacerbating the risk of error propagation.

To address these limitations, we adopt S-repair as the repair paradigm. On the one hand, S-repair addresses conflicts solely through tuple deletion, avoiding the semantic uncertainty introduced by fabricating new values and yielding repair outcomes with higher determinacy and interpretability. On the other hand, by employing the density–conflict degree penalty score as the optimization objective, the repair space becomes clearly quantifiable, offering a more stable and controllable optimization structure for designing efficient, near-optimal repair strategies.

\subsection{ML-based data cleaning}
To address the diverse characteristics of dirty data in real-world datasets, comprehensive ML-based error detection methods have been developed, which typically formulate error detection as a classification task. For example, SAGED ~\cite{DBLP:conf/edbt/0001KSS24} introduces a two-stage framework that combines meta-learning and semi-supervised learning, enabling efficient error detection under few-shot conditions and demonstrating strong generalization capability across domains. RAHA ~\cite{DBLP:conf/sigmod/MahdaviAFMOS019} generates feature vectors using simple error detection methods and then trains a detection classifier. Picket ~\cite{DBLP:journals/vldb/LiuZR22} employs a self-supervised learning model, PicketNet, to detect corrupted data during both training and inference. HoloDetect ~\cite{DBLP:conf/sigmod/HeidariMIR19}, based on few-shot learning, reduces the need for manual annotations through data augmentation and captures errors using multi-granularity feature representations. However, it still requires a certain amount of initial labeled data and may face scalability issues when applied to large datasets. In this context, Neutatz et al. proposed ED2 ~\cite{DBLP:conf/cikm/NeutatzMA19}, an active learning-based method that adopts a two-stage sampling strategy: first selecting the most promising columns, and then choosing the most informative cells within those columns for labeling. This approach significantly reduces the labeling effort, and on large-scale datasets, ED2 requires an order of magnitude fewer labels than HoloDetect. Machine learning-based methods primarily focus on cell-level error detection or value repair, producing error labels or predicted values rather than constructing a feasible data instance under constraints. So, they cannot be directly applied to the constraint-based subset repair problem discussed in this paper.

\subsection{Probabilistic and statistical data cleaning}
Probabilistic and statistical approaches to data cleaning have attracted considerable attention due to their ability to effectively model data distributions and error patterns. These methods can be broadly categorized into two groups: Bayesian inference frameworks ~\cite{DBLP:conf/icde/QinHWZZM0O024,DBLP:conf/aistats/LewASM21,DBLP:journals/ml/TsamardinosBA06} and lightweight graphical models ~\cite{DBLP:journals/pvldb/TzoumasDJ11}, both of which leverage probabilistic reasoning to alleviate the strong independence assumptions required by traditional approaches. BClean ~\cite{DBLP:conf/icde/QinHWZZM0O024} adopts a two-stage Bayesian inference framework, where a compensation scoring model mitigates the "error amplification" problem that arises when constructing networks from dirty data. It further allows network repair to incorporate simple constraints, such as regular expressions, without requiring PPL coding. PClean ~\cite{DBLP:conf/aistats/LewASM21} focuses on domain customization through probabilistic programming languages. By employing the Chinese Restaurant Process (CRP) to model implicit objects and link structures in relational databases, and combining Sequential Monte Carlo (SMC) with dynamic optimization proposal compilation, it scales to cleaning tasks on millions of records. MMHC ~\cite{DBLP:journals/ml/TsamardinosBA06} constructs a skeleton network without false negatives using the MMPC algorithm, identifies candidate parents through a max-min association heuristic, and eliminates false positives with conditional independence tests, complemented by symmetric checks to reduce edge misclassification. In contrast, lightweight graphical models ~\cite{DBLP:journals/pvldb/TzoumasDJ11} quantify attribute dependencies using mutual information, and construct a maximum spanning tree to capture nonlinear topological structures. For join predicates, they validate correlations between attributes via group-aggregate SQL queries. Although probabilistic inference methods can reconstruct the data distribution or infer potentially correct values, their fundamental repair mechanism remains rooted in value imputation, falling under the paradigm of value-repair or U-repair.

\subsection{Summary and Positioning}
Our work builds upon established concepts in data management and graph theory. Information entropy ~\cite{DBLP:journals/tkde/ChenCCHP11} and the frequency of attribute~\cite{DBLP:journals/tkde/DingWSWLG22} occurrences in constraints have been widely used to quantify data uncertainty, thereby supporting data quality assessment and cleaning tasks. Similarly, conflict graphs and vertex cover are classical graph-theoretic techniques for resolving constraint violations in databases and have long played a key role in tasks such as subset repair.

However, to the best of our knowledge, no existing work has systematically combined the following components within a unified cost-based S-repair framework to address the three major challenges discussed in Section 2.1—namely, homogeneous attribute weights, biased density estimation caused by clusters of dirty data, and high computational complexity:
(1) an attribute weighting density  based on information entropy and CFD frequency to mitigate attribute homogeneity;
(2) the introduction of conflict degree as a penalty term to correct the bias of local density estimation in dirty-data clusters;
(3) modeling constraint violations as a conflict graph, reducing the problem to vertex cover, and decomposing the global conflict graph into independent connected components to enable a divide-and-conquer approach, thereby reducing computational cost.

To address these challenges, we propose an optimization framework that systematically integrates local density evaluation, global conflict structure modeling, and a graph-based divide-and-conquer strategy at the connected-component level. By means of a density–conflict degree penalty score, our method enables efficient approximate optimal S-repair. We elaborate on our approach in the sections that follow.

\section{Problem definition}\label{sec:problemDefinition}
\begin{definition}[Functional Dependency (FD)]Let $R(U)$ be a relation schema, and $X$ and $Y$ be subsets of the attribute set $U$. If for any two tuples $t_1$, $t_2$ in an instance $I$ of $R$, whenever $t_1[X] = t_2[X]$ holds, it must be the case that $t_1[Y] = t_2[Y]$ holds, then $Y$ is said to be functionally dependent on $X$, denoted as $X \rightarrow Y$. The formal definition is as follows:
\[
\forall\ t_1, t_2 \in I,\ t_1[X] = t_2[X] \Rightarrow t_1[Y] = t_2[Y]
\]
\end{definition}

\begin{example}
\normalfont
For example, in the employee dataset of Table~\ref{employee dataset}, we have the following FDs:
\[
\text{FD}_1: \text{Work experience} \rightarrow \text{Salary}, 
\text{FD}_2: \text{Position} \rightarrow \text{Allowance}
\]
\end{example}

\begin{definition}[Conditional Functional Dependency (CFD)]CFDs extend standard FDs by incorporating semantic patterns. They are expressed as $\varphi = (X \rightarrow Y, T_p)$, where $X \rightarrow Y$ is a standard FD over schema $R$, and $T_p$ is a pattern tableau defined over $R$. Each pattern tuple $ t_p \in T_p $ contains either constants or wildcards “\_” for attributes in $X \mathbin{\cup} Y$. For a relation instance $I$, the CFD $\varphi = (X \rightarrow Y, T_p)$ holds if and only if for every pair of tuples $t_1, t_2 \in I$ and for every pattern tuple $ t_p \in T_p $ : if $t_1[X] \approx t_p[X]$ and $t_2[X] \approx t_p[X]$, then it must also be the case that $t_1[Y] \approx t_p[Y]$ and $t_2[Y] \approx t_p[Y]$. Otherwise, the CFD is violated. Here, $\approx$ denotes constant matching or wildcard matching with the pattern tuple.
\end{definition}

\begin{example}
\normalfont
For example, in the employee dataset shown in Table~\ref{employee dataset}, there exists a CFD which indicates that the allowance for all employees with position ``manage'' is fixed to 1000: $\{\text{Position} \rightarrow \text{Allowance}, (\text{manage}, 1000)\}$.
\end{example}

\begin{definition}[Conflict Tuple]A tuple that violates at least one given FD, CFD, or other integrity constraint is referred to as a conflicting tuple. We record the set of conflict tuples as $I_C$.
\end{definition}

\begin{definition}[Conflict Graph]A mathematical model used to describe the relationships of constraint violations between tuples in a database. Given a database instance $I$ and a set of constraints $\Sigma$, the conflict graph is defined as an undirected graph $G=(V, E)$,  where each vertex $v_i \in V$ corresponds to a conflicting tuple $t_i \in I_C$, and an edge $(v_i, v_j) \in E$ exists if and only if $t_i$ and $t_j$ violate at least one constraint in $\Sigma$.
\end{definition}

\begin{definition}[Connected Components of Conflict Graph]
Given a conflict graph $G = (V, E)$, its connected components are defined as maximal connected subgraphs $G_C = (V_C, E_C)$ that satisfy the following conditions:
(1) $V_C \subseteq V$ and $E_C \subseteq E$;
(2) for any $u, v \in V_C$, there exists a path $P(u, v)$;
(3) there does not exist a larger subset $V_C' \supset V_C$ such that $V_C'$ also satisfies the above conditions. Connected components are categorized into two types: clique components and non-clique components.
\end{definition}

\begin{definition}[Clique Component]
Given a connected component of the conflict graph $G_C = (V_C, E_C)$, if it satisfies the property of a complete graph:$\forall u,v \in V_C,\ (u,v) \in E_C$, then $G_C$ is called a clique component. That is, a clique component is a complete subgraph of the conflict graph, in which any two vertices are directly connected, and every pair of tuples within the clique are in conflict.
\end{definition}

\begin{definition}[Non-Clique Component]
Given a connected component of the conflict graph $G_C = (V_C, E_C)$, if there exist two non-adjacent vertices such that $\exists u, v\in V_C\quad\textrm{s.t.}\ (u,v)\notin E_C$,then $G_C$ is called a non-clique component.
In other words, a non-clique component is a connected but non-complete subgraph of the conflict graph.
\end{definition}

\begin{definition}[Conflict Degree]
The conflict degree of a tuple $t_i$ is defined as the number of tuples that conflict with it:
$$
CD(t_i) = |\{t_l \in I_C \mid (t_i, t_l) \not\models \Sigma\}|
$$
In the conflict graph $G=(V,E)$, the degree of conflict $CD(t_i)$ of a tuple $t_i$ is equal to the degree of vertex $v_i$ :
$$CD(t_i) = \deg(v_i)$$

A higher conflict degree indicates that the tuple is involved in more extensive conflicts and more severely compromises the overall data consistency.
\end{definition}

\begin{definition}[Minimum-weight Subset Repair]
Given a conflict graph $G = (V, E)$, where the vertex set $V$ represents tuples and the edge set $E$ represents conflict relations. Each vertex $v \in V$ is assigned a non-negative weight $w(v)$, which can be flexibly defined according to specific application requirements, such as data reliability and other domain-specific metrics or cost. The objective of the minimum-weight subset repair is to identify a removal set $I_n \subseteq V$ such that the remaining vertex set $I_S = V \setminus I_n$ satisfies the given set of conditional functional dependencies $\Sigma$, while minimizing the total weight of the removed vertices $\sum_{v \in I_n} w(v)$. Equivalently, this maximizes the total weight of the consistent subset $I_S$ retained in the database.
\end{definition}

\section{Conflict Modeling and Adaptive Penalty Mechanism}\label{sec:algorithm}
This section aims to elucidate the core foundation of the subset repair framework proposed in this paper, focusing on the implementation of conflict detection, the graph-structured modeling of conflict relationships, and the adaptive quantification strategy for tuple quality. Firstly, given the constraints of CFDs, we propose a conflict detection mechanism based on attribute inverted index and CFDs grouping. This mechanism innovatively integrates a dual optimization strategy: the attribute inverted index significantly reduces the candidate tuple comparison space, while the CFDs grouping mechanism minimizes redundant computations caused by repeated constraint matching. These optimizations create a cascading effect, collectively reducing detection complexity by both narrowing the search space and decreasing constraint matching overhead.

Subsequently, the detected conflict tuples are modeled as a conflict graph, where tuples serve as nodes and conflict relationships as edges. This graph structure expresses the conflict relationships between tuples, providing a structured representation for subsequent repair. Based on this structure, we further introduce an adaptive penalty scoring mechanism. By analyzing attribute information entropy, the frequency of attribute occurrence in CFDs, and the conflict degree, we construct a quality scoring function tailored for the repair task. This function transforms data quality and repair cost into quantifiable penalty scores.

In summary, this section will sequentially introduce the implementation of conflict detection, the graph-structured modeling of conflict relationships, and the adaptive quantification strategy for tuple quality, thereby laying the foundation for the repair algorithm presented in Section~5.

\subsection{Conflict Detection Based on Attribute Inverted Index and CFDs Grouping}
Conflict detection, as the first step of S-repair, aims to identify all tuple pairs that violate CFDs. Traditional conflict detection methods typically require checking all possible tuple pairs ($O(n^2)$) and their matching with all CFD rules ($O(R)$), resulting in an overall complexity of $O(n^2 \times R)$. In S-repair, it is sufficient to determine whether a conflict exists between tuples, without precisely identifying the specific CFDs they violate. Therefore, once a tuple pair is found to violate any CFD, it can be marked as a conflict pair, and the checking of the remaining CFDs for that pair can be terminated, reducing the complexity to $O(n^2 \times r)$, where $r$ represents the average number of rules checked before a conflict is detected. However, the quadratic dependency on tuple pairs still constitutes a significant performance bottleneck.

To address this issue, we propose a two-layer optimized conflict detection framework that integrates attribute-level inverted indexing and CFDs grouping. Through coordinated optimizations at both the attribute and rule levels, it effectively reduces the generation and matching overhead of candidate tuple pairs, thereby enhancing detection efficiency in large-scale data.

\textbf{Attribute Inverted Index. }By constructing inverted indexes for attribute values, candidate tuple pairs that potentially share equal values on at least one attribute can be efficiently filtered. The core advantage of this approach lies in systematically avoiding redundant comparisons between tuples whose values differ on all attributes, thereby substantially reducing invalid detection operations and improving the efficiency of conflict detection.

The performance improvement brought by inverted indexes depends heavily on the data distribution characteristics. In scenarios with highly dispersed data, the scale of the candidate conflict tuple pair set $C$ approaches a constant level, thereby significantly reducing the time complexity of conflict detection from the original $O(n^2 \times r)$ to $O(n \times (A + C \times g))$, where $n$ represents the number of tuples, $A$ represents the number of attributes, $g$ represents the number of rule groups ($g << r$), and $C$ represents the average size of the deduplicated candidate set. Candidate set deduplication is achieved using hash sets, ensuring that $C \leq n$ holds strictly even in extreme cases. In extreme scenarios, such as when all tuples have identical values across all attributes, the candidate set $C$ reaches its theoretical upper limit $n$, and the complexity of conflict detection degenerates to the pre-optimization level of $O(n^2 \times r)$.

However, on real-world datasets, the data distribution typically exhibits high dispersion, and the candidate set size $C$ is much smaller than the total number of tuples $n$. Consequently, the optimization strategy based on inverted indexes usually leads to a significant improvement in conflict detection efficiency in practice.

\textbf{CFDs Grouping. }The traditional single-rule detection employs a linear independent processing architecture, treating $m$ CFDs as entirely isolated detection units, which leads to a dual efficiency deficiency: First, it suffers from redundant computation—even when rules share identical left-hand side attribute sets (the $X$ attributes), the system must still perform completely identical attribute value comparisons, memory access operations, and conditional judgment processes separately for each rule, resulting in inflated computational complexity. Second, it results in limited index utilization, as attribute matching results cannot be reused across different rules. To address this performance bottleneck, we introduce a CFDs grouping mechanism: by merging CFDs with identical left-hand side attributes into a single group, comparisons on the same left-hand side attributes for each tuple pair need to be performed only once, rather than redundantly for each individual CFD.

\begin{example}
\normalfont
Suppose we need to detect a tuple pair $(t_1, t_2)$ against six CFDs: $\{\mathrm{CFD}_1: A,B \to C; \mathrm{CFD}_2: A,B \to D; \mathrm{CFD}_3: A,B \to E; \mathrm{CFD}_4: A,B \to F; \mathrm{CFD}_5: X \to Y; \mathrm{CFD}_6: X \to Z\}$. As shown in Figure~\ref{f4-1}, without grouping, conflict detection requires six separate rule comparisons, whereas after applying CFDs grouping strategy, only two comparisons are needed.
\end{example}

\begin{figure}[thb]               
	\centering
	\includegraphics[scale=0.5]{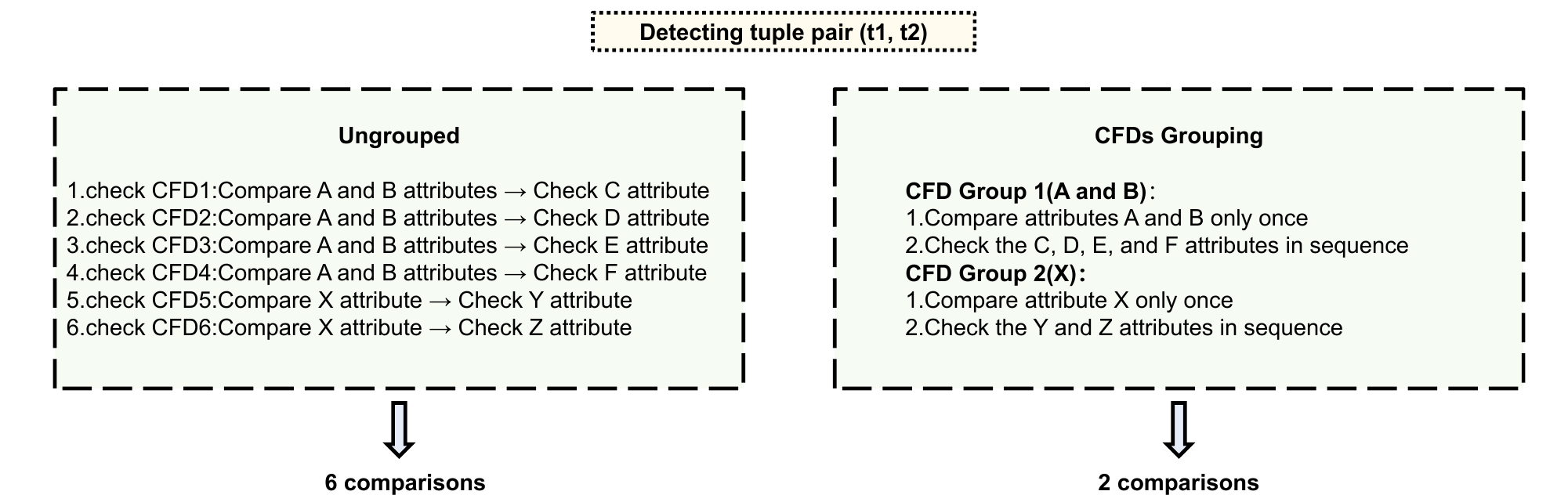}
	\caption{Schematic diagram of CFDs grouping}
	\label{f4-1}
\end{figure}

The CFDs Grouping mechanism effectively eliminates redundant attribute comparison operations, reducing the repeated detection complexity from $O(m \times n)$ to $O(g \times n)$ and further reducing the overall time complexity from $O(m n^2)$ to $O(g n^2)$, where $g$ denotes the number of rule groups ($g \ll m$), thereby significantly improving conflict detection efficiency. Although the CFDs grouping mechanism incurs an additional memory overhead of $O(m)$, its lightweight storage design for group structures only requires maintaining attribute set indices and rule pointers, effectively avoiding data redundancy. Therefore, the rule grouping mechanism essentially serves as an optimization trade-off, achieving significant performance gains in runtime efficiency at the cost of minimal additional space.

\textbf{Conflict graph construction. }The conflict detection results are stored in an undirected graph represented by an adjacency list structure, which serves as the foundational data structure for subsequent connected component analysis. Each conflicting tuple is represented as a vertex in the graph, and each conflicting pair generates a bidirectional edge connecting the corresponding vertices. Adjacency relationships are maintained via a hash map, enabling $O(1)$ complexity for neighbor queries.

\textbf{Algorithm Analysis. }CFDs Grouping and the Attribute Inverted Index form a synergistic optimization framework. The Attribute Inverted Index first enables efficient candidate set filtering, rapidly excluding irrelevant tuple pairs and substantially reducing the detection base. Subsequently, CFDs Grouping is applied to rapidly narrow the detection scope and perform batch rule verification. For each candidate tuple set, conflict detection for all CFDs within a group is completed in a single pass, and the detection of any rule conflict immediately triggers early termination for the current group, significantly reducing redundant attribute matching and rule checking overhead. This hierarchical and progressive optimization effectively and collaboratively addresses the dual bottlenecks of data scale explosion (reduced from $O(n^2)$ to $O(C \cdot n)$) and rule scale explosion (reduced from $O(r)$ to $O(g)$). While preserving detection accuracy, it further optimizes the overall complexity from the traditional $O(r \cdot n^2)$ to $O(g \cdot C \cdot n)$, achieving a leap-level improvement in conflict detection efficiency. The pseudocode for conflict detection is shown in Algorithm 1.

\begin{algorithm}[!t]
\small
\caption{Conflict Detection with Inverted Index and CFDs Grouping}
\label{alg:conflict_detection}
\begin{algorithmic}[1]
\renewcommand{\algorithmicrequire}{\textbf{Input:}}
\renewcommand{\algorithmicensure}{\textbf{Output:}}
\Require Dataset $DB$, CFD list $cfdList$
\Ensure $\mathcal{R}_{conf}$, $\mathcal{P}_{conf}$, $\mathcal{G}_{conf}$, $\mathcal{R}_{Non-conf}$, $\mathcal{M}_{conf}$
  \State $\mathcal{R}_{conf}$: Conflict tuple set $\gets \emptyset$
  \State $\mathcal{P}_{conf}$: Conflict pair list $\gets []$
  \State $\mathcal{G}_{conf}$: Conflict graph $\gets \langle V, E \rangle$ where $V \gets \emptyset, E \gets \emptyset$
  \State $\mathcal{R}_{Non-conf}$: Non-conf tuple set $\gets \emptyset$
  \State $\mathcal{M}_{conf}$: Conflict adjacency map $\gets \{\}$ 

\State $cfdGroups \gets groupCFDs(cfdList)$ 
\State $invIndex \gets buildInvertedIndex(DB)$ 

\For{each tuple $t_1$ at index $i$ in $DB$}
    \State $candidates \gets \emptyset$
    \State $isNon-conf \gets \text{true}$

    \For{each attribute $j$ in $t_1$}
        \State $val \gets value(t_1, j)$
        \State $candidates \gets candidates \cup \{k \mid k \in invIndex[j][val] \land k > i\}$
    \EndFor
    
    \For{each candidate $t_2$ at index $j$ in $candidates$}
        \If{$violatesAnyCFD(t_1, t_2, cfdGroups)$}
            \State $isNon-conf \gets \text{false}$
            
            \If{$i \notin \mathcal{R}_{conf}$}
                \State $\mathcal{R}_{conf} \gets \mathcal{R}_{conf} \cup \{i\}$
            \EndIf
            \If{$j \notin \mathcal{R}_{conf}$}
                \State $\mathcal{R}_{conf} \gets \mathcal{R}_{conf} \cup \{j\}$
            \EndIf
            \State $\mathcal{P}_{conf} \gets \mathcal{P}_{conf} \cup \{(i, j)\}$
            \State $\mathcal{G}_{conf}.V \gets \mathcal{G}_{conf}.V \cup \{i, j\}$
            \State $\mathcal{G}_{conf}.E \gets \mathcal{G}_{conf}.E \cup \{(i, j)\}$
            \State $\mathcal{M}_{conf}[i] \gets \mathcal{M}_{conf}[i] \cup \{j\}$
            \State $\mathcal{M}_{conf}[j] \gets \mathcal{M}_{conf}[j] \cup \{i\}$
        \EndIf
    \EndFor
    
    \If{$isNon-conf \land i \notin \mathcal{R}_{conf}$}
        \State $\mathcal{R}_{Non-conf} \gets \mathcal{R}_{Non-conf} \cup \{i\}$
    \EndIf
\EndFor

\State \textbf{return} $\langle \mathcal{R}_{conf}, \mathcal{P}_{conf}, \mathcal{G}_{conf}, \mathcal{R}_{Non-conf}, \mathcal{M}_{conf} \rangle$
\end{algorithmic}
\end{algorithm}

\begin{example}
\normalfont
Consider the dataset in Table~\ref{employee dataset} and the following three CFDs:
\begin{align*}
\text{CFD}_1: &[\,\text{Work experience}=1, \text{Position}=\text{"manage"}] \rightarrow [\text{Allowance}=1000],\\
\text{CFD}_2: &[\,\text{Work experience}=3, \text{Position}=\text{"technology"}] \rightarrow [\text{Allowance}=800],\\
\text{CFD}_3: &[\,\text{Work experience}=3, \text{Position}=\text{"operate"}, \text{Salary}=8000] \rightarrow [\text{Allowance}=800]
\end{align*}

As shown in Figure~\ref{f4-2}, before the actual detection begins, the CFDs are first grouped according to their left-hand side attributes, and an inverted index is constructed for each attribute value. According to the left-hand side (LHS) attributes, the CFDs in this example are divided into two groups:
$CFD\ group_1$: $CFD_1$ and $CFD_2$ (both share the LHS attributes $Work\ experience$ and $Position$);
$CFD\ group_2$: $CFD_3$ (its LHS attributes are $Work\ experience$, $Position$, and $Salary$).

The conflict detection process begins with an outer loop that iterates over all tuples (the current tuple is denoted as $r_1$). For each $r_1$, a candidate conflict tuple set is constructed, which includes all tuples sharing the same value with $r_1$ on any attribute. Taking $t_0$ as an example:
\begin{align*}
Work\ experience &= 1 \rightarrow [t_1, t_2, t_3] \quad (\text{excluding itself}) \\
Salary &= 6500 \rightarrow [t_1] \\
Position &= manage \rightarrow [t_1, t_2] \\
Allowance &= 1000 \rightarrow [t_1, t_2, t_9]
\end{align*}

After merging and deduplication, the candidate conflict tuple set of $t_0$ is $[t_1, t_2, t_3, t_9]$. In this way, conflict pair detection for $t_0$ only needs to be performed within its candidate conflict tuple set, rather than against all tuples, effectively reducing unnecessary comparisons and improving efficiency.

After constructing the candidate conflicting tuple set, the inner loop iterates over each tuple in this set (denoted as $r_2$). For each tuple pair $(r_1, r_2)$, conflict detection is performed across all CFD groups. Taking $(t_0, t_1)$ as an example:
First, the rules in CFD group 1 are checked. Since $(t_0, t_1)$ matches the left-hand side attribute values of $CFD_1$, a conflict is detected and recorded. Then, for $CFD_2$, as $(t_0, t_1)$ does not match the left-hand side attribute values, $CFD_2$ is skipped. Next, the rules in CFD group 2 are checked, and since $(t_0, t_1)$ does not match the left-hand side attribute values of $CFD_3$, $CFD_3$ is also skipped.
Finally, the results are recorded, and both the conflict list and the conflict mapping table are updated.

\begin{figure}[t]
    \centering
    \includegraphics[width=1\textwidth, height=0.6\textheight, keepaspectratio]{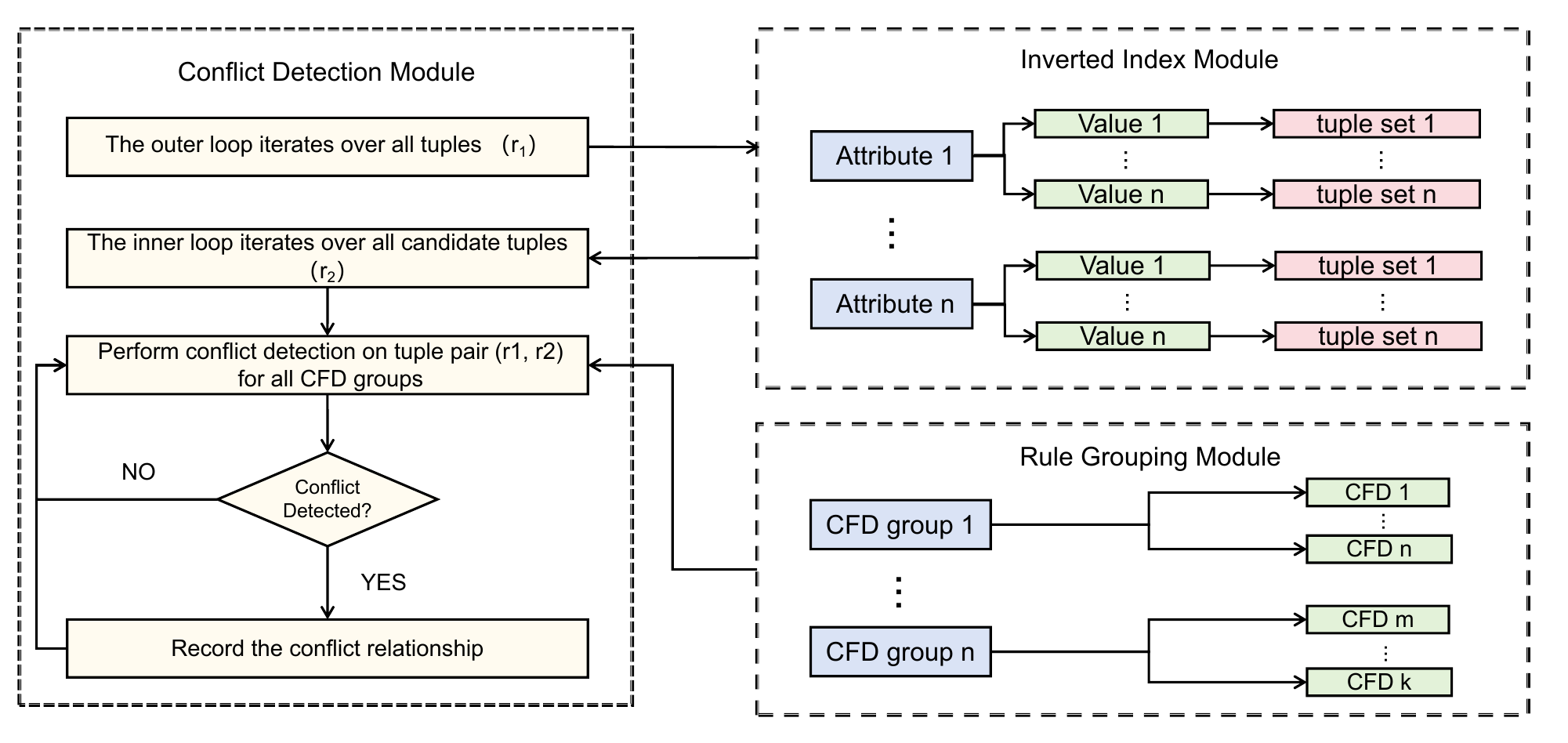}
    \caption{Diagram of Conflict Detection Process}
    \label{f4-2}
\end{figure}
\end{example}

\subsection{Calculation of Density-Conflict Degree Penalty Score}

\textbf{Entropy \& CFD Weighted Density. }EntroCFDensity is a density metric that integrates information entropy and CFDs, used to quantify the quality of tuples in a database. To optimize similarity calculation efficiency in high-dimensional spaces, this work adopts a similarity matrix precomputation strategy, significantly reducing the time complexity of subsequent modules such as density calculation and KNN search. First, the similarity matrix precomputation constructs a two-dimensional array of size $n \times n_c$ ($n_c$ being the number of non-conflicting tuples, i.e., tuples that do not violate any CFD), which contains the similarity between every tuple and each non-conflicting tuple, thereby eliminating redundant similarity calculations in subsequent density computations. Based on this similarity matrix, the $K$ nearest non-conflicting tuple neighbors (kNN) are determined for each tuple, and their approximate density contribution values are calculated accordingly. The process primarily consists of three stages.

The first stage involves weighting each attribute. Specifically, for attribute $i$, its frequency of occurrence in the CFDs is calculated, where $L_r$ denotes the set of attributes on the left-hand side of a CFD, $R_r$ denotes the set of attributes on the right-hand side, $\Sigma$ is the given set of CFDs, and $\mathbb{I}$ is the indicator function:
$$f_i = \sum_{r \in \Sigma} \left[ \mathbb{I}(i \in L_r) + \mathbb{I}(i \in R_r) \right]$$

Next, the frequency weights are normalized, where $\alpha \in (0, 1)$ is an empirical coefficient that can be set according to the specific application:
$$w_i^{\text{CFD}} = \alpha \cdot \frac{f_i}{\max_{j \in U}f_j}$$

Then, the information entropy of attribute $i$ is calculated, followed by the normalization of the entropy weight, where $\beta \in (0, 1)$ is an empirical coefficient and $Values$ represents the set of all possible values for attribute $i$:
$$H_i = -\sum_{v \in \text{Values}(i)} p_v \log p_v, \quad p_v = \frac{\text{count}(v)}{n}$$
$$w_i^{\text{Entropy}} = \beta \cdot \frac{H_i}{\sum_{j=1}^{\lvert U \rvert} H_j}$$

Finally, the comprehensive weight of each attribute is calculated as ($\alpha$ and $\beta$ satisfy $\alpha + \beta = 1$):
$$w_i = w_i^{\text{CFD}} + w_i^{\text{Entropy}} = \alpha \cdot \frac{f_i}{\max_{j \in U}f_j} + \beta \cdot \frac{H_i}{\sum_{j=1}^{\lvert U \rvert} H_j}$$

We set a lower bound of 0.1 for the weights with the following objectives:
(1) Ensure basic attribute participation: prevent any attribute weight from dropping to 0, ensuring that all attributes contribute to some extent.
(2) Avoid extreme bias: in special cases (e.g., when an attribute does not appear in any CFD rule), ensure that the algorithm does not completely ignore the attribute.
(3) Maintain similarity metric validity: ensure that similarity calculations remain multi-dimensional, avoiding reduction to a single-attribute similarity.

The second stage is similarity computation. For each neighbor $t_i \in N_k(t)$, the weighted similarity is calculated as:
$$\text{Sim}(t, t_i) = \sum_{j=1}^{d} w_j \cdot s_j(t, t_i)$$

The similarity $s_j$ is defined as:
$s_j(t, t_i) = 
\begin{cases} 
\frac{1}{1 + \left| v_t^{(j)} - v_{t_i}^{(j)} \right|}, & \text{numerical attribute} \\
\mathbb{I}_{v_t^{(j)} = v_{t_i}^{(j)}}, & \text{categorical attribute} \\
\mathbb{I}\left(v_t^{(j)}\ \text{and}\ v_{t_i}^{(j)}\ \text{are missing}\right), & \text{missing value}
\end{cases}$

The final stage is density aggregation, which aggregates the weighted similarity of $K$ nearest neighbors of each tuple as the density value of that tuple:
$$\rho(t) = \sum_{i=1}^{K} \text{Sim}(t, t_i)$$

\textbf{Conflict degree statistics. } In this study, we introduce a new concept: conflict degree. As defined in Definition 3.8, the conflict degree of a tuple $t_i$ is the number of tuples that conflict with it. The conflict degree captures the global conflict network structure and directly quantifies the extent to which a tuple disrupts overall consistency. Intuitively, tuples with higher conflict degrees are more likely to contain erroneous information, as they are inconsistent with multiple other tuples. Figure~\ref{f4-3} illustrates the relationship between tuples violating constraints and their conflict degrees across four different real-world datasets. It can be observed that dirty data (i.e., data containing errors or inaccurate information) generally exhibits higher conflict degrees than clean data. Based on these observations, our analysis incorporates conflict degree in addition to density. While data density reflects only local neighborhood relationships, conflict degree compensates for this limitation by revealing complex interactions and the global network structure within the dataset.

\begin{figure}[thb]               
	\centering
	\includegraphics[scale=0.33]{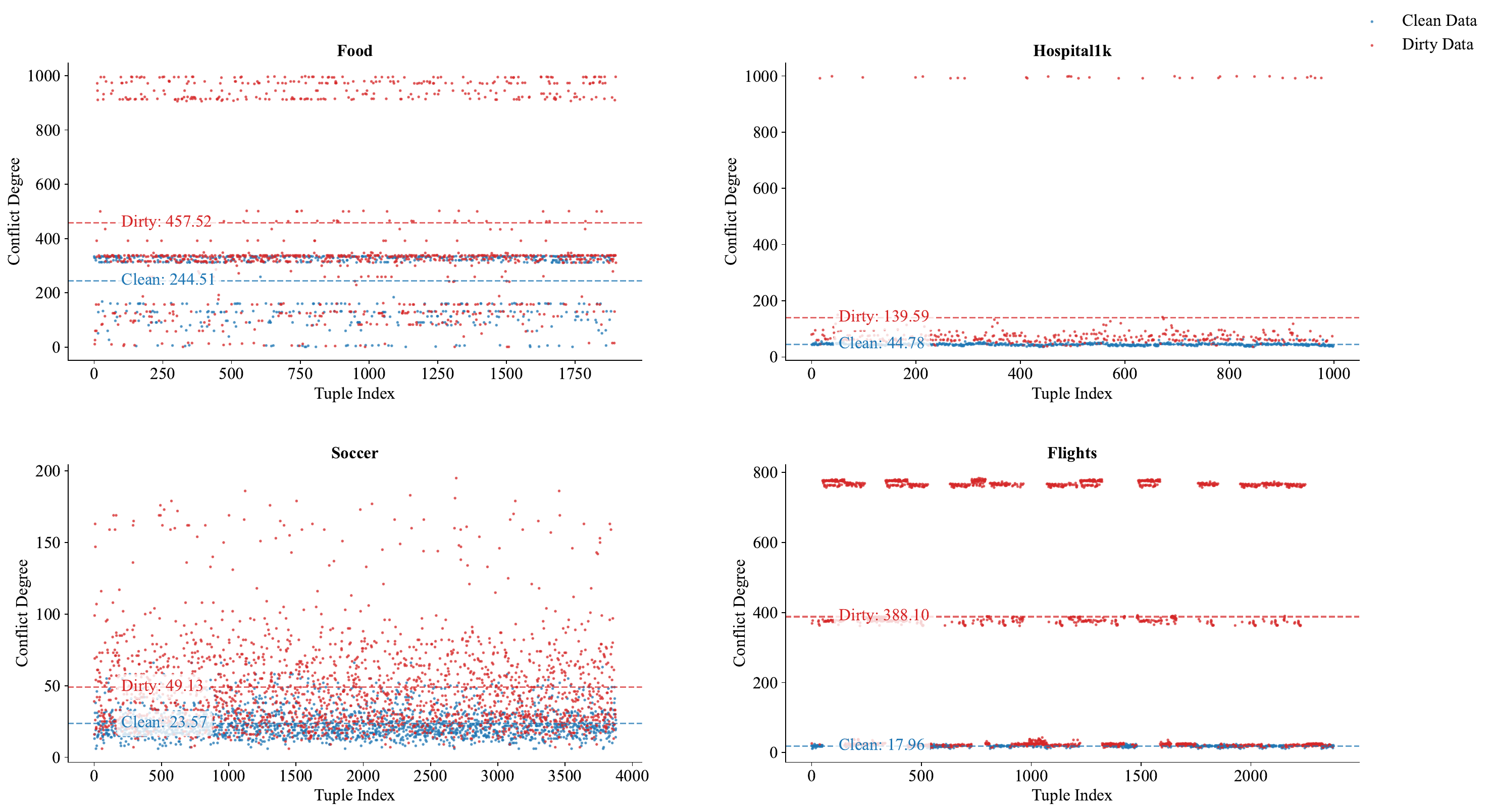}
	\caption{The conflict degree between dirty data and clean data in six datasets}
	\label{f4-3}
\end{figure}

\textbf{Calculation of the density–conflict degree penalty score. } In conflict-graph-based data cleaning algorithms, the penalty score is a key metric that determines the removal priority of tuples. Its design integrates data quality characteristics (density) with conflict topology features (conflict degree). The penalty mechanism guides the generation of the removal set by quantifying the "likelihood of error" for each tuple. While density can capture local similarity, it has inherent limitations in constraint-driven data cleaning, as it cannot identify global constraint violations. In cases where conflicting tuples have high local density, purely density-based methods may mistakenly remove legitimate tuples that have low density but satisfy constraints. By computing a joint score of density and conflict degree (detailed in Definition 3.8), the approach enhances the topological awareness of S-repair. The correction mechanism works as follows: for low-density, non-conflicting tuples, the conflict degree component neutralizes the penalty to avoid false deletions; for high-density, conflicting tuples, the conflict degree component reinforces the penalty to ensure removal. The penalty score formula integrates data quality (density) and conflict topology (conflict degree):

$$\text{penalty}(t) = w_1 \cdot \frac{1}{\rho(t) + \varepsilon} + w_2 \cdot \text{CD}(t)$$

The weights $w_1$ and $w_2$ for weighted density and conflict degree are determined by a coefficient-of-variation-driven adaptive mechanism. These weights are dynamically adjusted based on the statistical characteristics of the connected components. The computation proceeds in three stages.

The first stage is the calculation of the coefficient of variation. $CV_d$ and $CV_c$ represent the coefficients of variation for density and conflict degree, respectively, and are defined as the ratio of the standard deviation ($\sigma$) to the mean ($\mu$). The coefficient of variation standardizes the degree of dispersion, effectively eliminating dimensional differences between variables (density being continuous and conflict degree being discrete), enabling horizontal comparison of data distribution variability. This metric characterizes the heterogeneity of the local data structure and reflects the unevenness and complexity of feature distribution within a region:
$$CV_d = \frac{\sigma_d}{\mu_d}, \quad CV_c = \frac{\sigma_c}{\mu_c}$$

The second stage is the calculation of weights. The base weight is set to 0.5 to represent a neutral decision baseline. The linear amplification of dispersion is achieved using the following formula:
$$\omega'_1 = 0.5 \times (1 + CV_d), \quad \omega'_2 = 0.5 \times (1 + CV_c)$$

The final stage is normalization, whose core objective is to ensure the rationality and validity of the weights. The process consists of two key steps: first, a boundary constraint formula is applied to restrict the weight range, preventing decision bias caused by extreme weights (less than $0.1$ or greater than $0.9$); second, a probabilistic normalization formula is adopted to maintain the interpretability of the weights as probabilities with a total sum of $1$, thereby enhancing the robustness of the algorithm in anomalous components. This design not only avoids decision errors resulting from unreasonable weights but also ensures the scientific soundness and stability of weight allocation. The bounded constraint prevents degeneracy where one feature dominates entirely.
\[
\omega_1 = \max(0.1, \min(0.9, \omega'_1)), \quad
\omega_2 = \max(0.1, \min(0.9, \omega'_2))
\]
\[
\omega_{\text{density}} = \frac{\omega_1}{\omega_1 + \omega_2}, \quad
\omega_{\text{conflict}} = \frac{\omega_2}{\omega_1 + \omega_2}
\]

The dynamic weighting mechanism proposed in this study achieves topology-adaptive decision optimization through a three-stage computational process. First, the coefficient of variation quantifies the dispersion characteristics of data distribution within each connected component. Then, initial weight estimates are generated based on the degree of dispersion. Finally, stable weights are obtained through boundary constraints and probabilistic normalization. This mechanism overcomes the limitations of traditional fixed-weight schemes, enabling a refined "one component, one strategy" decision paradigm—when the density distribution becomes more divergent (high $CV_d$), the algorithm automatically enhances density orientation (favoring the retention of highly consistent tuples); when the conflict distribution becomes more uneven (high $CV_c$), it adaptively strengthens conflict sensitivity (focusing on the removal of frequently violating tuples).

\begin{example}
\normalfont

Consider the example dataset in Table~\ref{employee dataset}, which is subject to the functional dependency set $\Sigma = \{ FD_1: \text{Work experience} \rightarrow \text{Salary},\;
FD_2: \text{Position} \rightarrow \text{Allowance} \}
$. Under these constraints, there exist three candidate minimal removal sets: $S_1=\{t_2,\allowbreak t_3,\allowbreak t_6,\allowbreak t_7,\allowbreak t_8,\allowbreak t_9\}$, $S_2=\{t_2,t_3,t_4,t_5,t_8,t_9\}$, and $S_3=\{t_0,t_1,t_6,t_7,t_8,t_9\}$. In this dataset, the tuples ${t_6,t_7,t_8,t_9}$ form a locally dense dirty cluster because they share the same incorrect values (Work experience $=3$ and Salary $=8000$). Consequently, retaining this cluster leads to a higher aggregated density, even though the cluster itself consists entirely of erroneous tuples. Due to their strong local similarity, these tuples appear highly reliable under density-based evaluation.

In density-based S-repair strategies~\cite{DBLP:journals/tkde/SunSY24}, tuples are evaluated according to their $k$-nearest-neighbor density. In this example, the clean tuples $t_4$ and $t_5$ partially overlap with the dirty cluster on the left-hand side attributes of the dependency and are therefore close to the dirty cluster in the attribute space. As a result, the density-based approach tends to preserve this dense dirty cluster while removing tuples with lower density. This proximity causes the density-based method to mistakenly identify $t_4$ and $t_5$ as erroneous. Consequently, the algorithm selects $S_2$ as the removal set, incorrectly deleting the clean tuples $t_4$ and $t_5$, while retaining erroneous tuples that are mutually reinforced through local density. This phenomenon highlights a fundamental limitation of density-based metrics: when errors appear in clustered forms, locally dense dirty tuples can dominate the aggregation process and lead to incorrect cleaning decisions.

In contrast, the density–conflict degree–based S-repair method evaluates tuples not only according to their local density but also incorporates conflict degree to capture their participation in the global conflict graph. Although the tuples in the dirty cluster exhibit high local density, they are involved in multiple conflicts with respect to $FD_2$ and therefore have high conflict degrees. Conversely, the tuples $t_4$ and $t_5$ participate in relatively few conflicts, and thus are not misclassified as erroneous, even though they are close to dirty data in terms of similarity. As a result, the density–conflict degree–based S-repair selects $S_1$ for removal, correctly identifying the core erroneous cluster for deletion while preserving clean tuples such as $t_4$ and $t_5$.

This example illustrates that density-based metrics are susceptible to locally dense dirty clusters, which may dominate the aggregation process and lead to the misclassification of clean data. By introducing conflict degree, the proposed penalty score shifts the decision criterion from local similarity to global conflict structure, enabling more accurate identification of truly erroneous tuples. Overall, this example demonstrates that incorporating conflict degree effectively mitigates the adverse effects of local density aggregation, resulting in more accurate and robust cleaning decisions.
\end{example}

\textbf{Complexity analysis of density calculation. }
The essence of density computation lies in a multi-stage weighted process that integrates CFD rules and information entropy, maintaining scalability in both time and space dimensions.
In the attribute weighting stage, all rules and tuple attributes are traversed, resulting in a time complexity of $O(|R|d + nd)$, where $|R|$ denotes the number of CFD rules, $d$ is the number of attributes, and $n$ is the total number of tuples. The space consumption of this stage is only $O(d)$.
The similarity matrix is constructed during the initialization phase to establish a complete mapping of pairwise similarities, reducing subsequent similarity query operations in the core algorithm from $O(d)$ to $O(1)$. The time complexity of this stage is $O(n \cdot n_c \cdot d)$, where $n_c$ represents the number of non-conflicting tuples. This design incurs an additional space cost of $O(n \cdot n_c)$, which can be efficiently handled by modern hardware resources in large-scale data cleaning scenarios, providing an order-of-magnitude computational acceleration.
The $K$-nearest neighbor search replaces global sorting with partial sorting, optimizing the time complexity from $O(n \cdot n_c \log n_c)$ to $O(n \cdot n_c \log K)$. The $K$NN storage maintains $K$ nearest neighbor indices and similarities for each tuple, with a space cost of $O(n \cdot K)$.
Finally, the density aggregation stage reuses precomputed weights for similarity accumulation, achieving linear computation in $O(n \cdot K \cdot d)$ time.
In summary, the overall time complexity of the density computation is:
$T(n) = O(|R|d + nd)_{\text{weights}} + O(n \cdot n_c \cdot d)_{\text{similarity}} + O(n \cdot n_c \log K)_{\text{KNN}} + O(n \cdot K \cdot d)_{\text{density}} \approx O(n \cdot n_c)$.

\textbf{Complexity analysis of penalty score calculation for clique components. }
In a clique component, all tuples conflict with each other.
Regarding time complexity, by exploiting the constant property of the conflict degree ($d_c = |C| - 1$), the dynamic statistical stage can be omitted, where $C$ denotes the size of the connected component.
Furthermore, by adopting a fixed weight configuration ($w_1 = 1$, $w_2 = 0$), the computation of the coefficient of variation is skipped, simplifying the penalty score transformation into a pure reciprocal density operation.
As a result, the overall computational complexity is $O(|C|)$.
In terms of space complexity, only the density value array ($O(|C|)$) and constant parameter storage are required.

\textbf{Complexity analysis of penalty score calculation for Non-clique components. }
An efficient processing procedure is achieved through a refined four-stage workflow.
First, the pre-stored density values are retrieved in $O(1)$ time based on the density array. Then, conflict degrees are computed via hash mapping with $O(1)$ complexity.
Subsequently, a single-pass statistical method is employed to calculate the coefficients of variation for both density and conflict degree, enabling dynamic weight assignment within $O(|C|)$ time.
Finally, the penalty aggregation formula is executed, and the initial solution set is constructed in $O(|C|)$ time.
The memory consumption primarily arises from three core data structures: the array storing precomputed density values, the hash map recording conflict relations, and the temporary arrays used to cache statistical quantities of density and conflict degree during coefficient computation.
All three structures strictly maintain a space complexity of $O(|C|)$.

\section{Graph-Decomposition Based Approximate S-repair Algorithms}\label{sec:algorithm1}

Following the construction of the conflict graph, this section further discusses how to utilize the Density-Conflict Degree Penalty Score on the conflict graph to solve the approximate optimal subset repair problem. Unlike directly performing global optimization on the entire conflict graph, the proposed solution framework adheres to a "structure-first, penalty-score-driven" principle, namely: decompose first, score next, and finally solve. Specifically, the conflict graph is first decomposed into several independent connected components, thereby reducing the global repair task into multiple independent subproblems.  Subsequently, the Density–Conflict Degree Penalty Score is computed within each component to guide the repair decision, which effectively avoids the exponential complexity of global optimization.

\begin{figure}[b]
    \centering
    \includegraphics[width=1\textwidth, height=1\textheight, keepaspectratio]{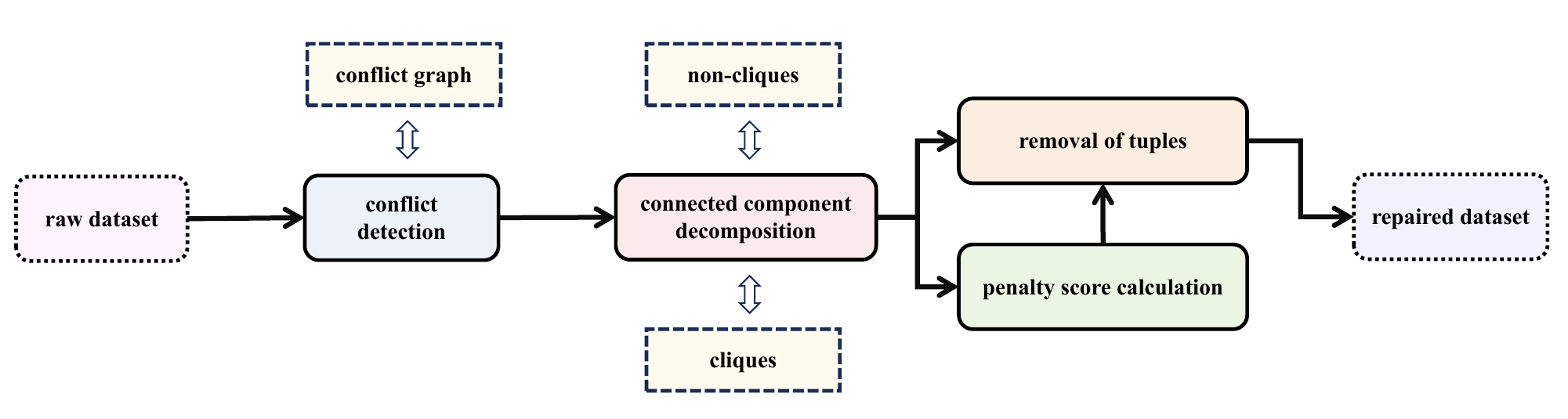}
    \caption{Flowchart of Topology-Aware Subset Repair via Entropy-Guided Density and Graph
Decomposition}
    \label{f5-1}
\end{figure}

Overall, this section proposes an approximate subset repair framework based on graph decomposition and penalty-driven mechanisms, and introduces the high-efficiency PPIS algorithm and the MICO algorithm with theoretical guarantees. 

 As shown in Figure~\ref{f5-1}, the process is primarily divided into four steps: (1) Conflict detection, which efficiently identifies tuple pairs violating CFD rules using inverted indexes and CFDs grouping, followed by the construction of a conflict graph and conflict relationship mapping. (2) Decomposition of the conflict graph into multiple connected components. (3) A penalty score is computed for each tuple based on EntroCFDensity and conflict degree, where the weighting of these two factors is dynamically adjusted via a coefficient of variation-driven adaptive mechanism according to the statistical characteristics of the current connected component. (4) Based on the penalty scores, tuples are removed from each connected component. The union of removal sets from all connected components constitutes the global minimal removal set. Specifically, for clique components, only the tuple with the lowest penalty is retained; for non-clique components, two removal strategies are proposed: Penalty-Prioritized Independent Set (PPIS) and Mixed Integer Covering Optimization (MICO).

\subsection{Connected Component Decomposition of Conflict Graph}

In data cleaning tasks based on CFDs, the topological structure of the conflict graph encapsulates crucial error distribution characteristics. We propose a divide-and-conquer framework for connected component processing: by partitioning the global conflict graph into several topologically independent connected components and independently solving for the minimal removal set within each component,  the framework effectively reduces problem complexity and significantly improves overall cleaning efficiency. Based on the theoretical properties of connected components in graph theory, we further derive the following key conclusions:

\begin{lemma}[Subproblem independence]\label{lem:subproblem-independence}
The global minimal removal set $I_n$ can be decomposed as
$$I_n = \bigcup_{i=1}^{m} I_i,$$
where $I_i$ is the local removal set of connected component $G_i$ and $m$ is the total number of connected components.
\end{lemma}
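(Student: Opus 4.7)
\textbf{Proof plan for Lemma~\ref{lem:subproblem-independence}.} The plan is to exploit the fact that, by construction of the conflict graph and by the definition of connected components, every constraint violation is ``confined'' to a single component, so the global repair problem separates additively across components. First I would fix notation: let $G = (V, E)$ be the conflict graph of $I_C$ with connected components $G_1,\dots,G_m$ where $G_i = (V_i, E_i)$, and let $I_i^{C} \subseteq I_C$ be the tuples corresponding to $V_i$. By the definition of a connected component, the vertex sets $V_1, \dots, V_m$ form a partition of $V$, and every edge $(u,v) \in E$ lies entirely within some $V_i$ (otherwise $u$ and $v$ would be in the same component).

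The first key step is to show that the global S-repair problem decomposes into $m$ independent subproblems. I would argue this in two directions. For correctness, a set $S \subseteq V$ is a valid vertex cover of $G$ (equivalently, $V \setminus S$ is an independent set, i.e., a conflict-free surviving instance) if and only if $S \cap V_i$ is a valid vertex cover of $G_i$ for every $i$; this follows immediately from the fact that each edge lives in exactly one component. For optimality, since the penalty score $\mathrm{penalty}(t)$ depends only on the tuple $t$ and on statistics computed within its own connected component (density values and conflict degrees are local to the component, as established in Section~4.2), the total cost is additive:
\begin{equation*}
\sum_{t \in S} \mathrm{penalty}(t) \;=\; \sum_{i=1}^{m} \sum_{t \in S \cap V_i} \mathrm{penalty}(t).
\end{equation*}
Hence minimizing the global objective over valid covers is equivalent to independently minimizing each per-component objective over valid covers of $G_i$.

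The second step is to conclude the lemma. Let $I_i$ denote an optimal local removal set for $G_i$. Then $\bigcup_i I_i$ is a valid global removal set by the first direction above, and its total penalty equals $\sum_i \mathrm{penalty}(I_i)$, which is the sum of the optimal values of the subproblems. Conversely, any global optimum $I_n$ induces valid local covers $I_n \cap V_i$ whose costs sum to the global optimum; since each summand is at least the optimal local cost, we get equality in every component, so each $I_n \cap V_i$ is itself a local optimum. Thus $I_n = \bigcup_{i=1}^{m} (I_n \cap V_i)$ gives the claimed decomposition (up to the choice of optimal $I_i$ when multiple optima exist).

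\textbf{Main obstacle.} The delicate part is justifying that ``minimality'' in the global problem coincides with minimality in each subproblem. This rests on two facts that should be flagged explicitly: (i) the penalty score of a tuple is a local quantity (its density and conflict degree are computed within its own component), so the objective truly separates; and (ii) uniqueness need not hold, so the lemma should be read as an existence statement about \emph{some} decomposition of an optimal $I_n$, rather than as a pointwise identity for every optimum. Once these are spelled out, the remainder is a routine partition argument on the edge set.
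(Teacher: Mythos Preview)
Your proposal is correct and follows essentially the same route as the paper's brief proof: both rest on the observation that no conflict edge crosses component boundaries, so validity and cost decompose componentwise, with your version being more explicit about the two directions and the non-uniqueness caveat. One minor correction: density is not computed ``within the component'' but globally over all non-conflicting tuples; this does not affect your argument, however, since what you actually need is only that each tuple's penalty is a fixed weight, which makes the objective additive regardless.
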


\begin{proof}
According to Definition~3.5, there are no conflict edges between tuples belonging to different connected components:
$$\forall u \in G_i,\ v \in G_j\ (i \ne j):\ (u,v) \notin E.$$
Therefore, removing a tuple in $G_i$ cannot resolve any conflict in $G_j$, and vice versa.  
Hence the global optimization problem decomposes into $m$ independent subproblems, one for each component.  
Let $I_i$ be the minimal removal set that resolves all conflicts within $G_i$. Any global solution must remove at least $I_i$ from each component, and the union $\bigcup_{i=1}^m I_i$ resolves all conflicts in $G$.  
\end{proof}

To enhance the scalability and computational efficiency of the subsequent repair process, we adopt the iterative depth-first search (IDFS) strategy during the connected component decomposition stage to decompose the conflict graph $G = (V, E)$ obtained from the conflict detection phase into several topologically independent connected subgraphs. Specifically, the algorithm explicitly maintains a processing stack to perform DFS traversal, thereby avoiding the potential stack overflow risks caused by the reliance on the function call stack in traditional recursive implementations. This iterative approach provides higher memory safety and stability when dealing with large-scale datasets or highly connected conflict graphs. Through this decomposition, the original repair problem is reduced to multiple smaller-scale, mutually independent subproblems, laying the foundation for subsequently solving the minimal repair set within each connected component.

\textbf{Complexity Analysis.} In the connected component decomposition process, the time complexity primarily consists of two phases: iterative Depth-First Search (DFS) traversal and topological detection of connected components: (1) Iterative DFS Phase: Requires traversing all nodes and edges in the conflict graph, with a time complexity of $O(|V| + |E|)$, where $|V|$ represents the number of conflicting tuples and $|E|$ represents the number of conflict relationships. (2) Topological Detection Phase: Detects clique components by verifying whether the node degrees satisfy the condition $|C| - 1$ (without needing to traverse specific neighbors), with a time complexity of $O(|V|)$. Overall, the total time complexity is $O(|V| + |E|)$, as both stages exhibit linear complexity characteristics.

The space complexity is determined by the core data structures: (1) Adjacency list storage: Maintaining the topological structure of the conflict graph requires $O(|V| + |E|)$ space. (2) Iterative DFS stack: In the worst case (e.g., a chain-like graph), it requires $O(|V|)$ space. Overall, the total space complexity is $O(|V| + |E|)$, dominated by the adjacency list storage.

\subsection{Removal of tuples}
\textbf{Removal of clique components. }When processing connected components of the conflict graph, it is noted that clique-structured components exhibit complete conflict characteristics, forming a complete subgraph where all tuples are pairwise mutually exclusive. In such structures, at most one tuple can be retained to satisfy constraint consistency. Based on the minimal repair principle, we adopt a penalty-driven strategy: directly retain the tuple with the smallest penalty score within the clique and remove all other tuples. Furthermore, since conflict edges exist between any two vertices within the clique, all tuples in the clique share the same conflict degree. Therefore, the penalty calculation for clique components ignores the conflict degree and uses only density as the metric. This strategy efficiently resolves fully conflicting cases while avoiding unnecessary computation, with an implementation time complexity of $O(n)$, where $n$ denotes the number of nodes within the component, thereby significantly improving processing efficiency.

For connected components with non-clique components, the internal topological relationships are generally more complex and cannot be directly resolved through simple methods. To address this, we propose two solutions: Penalty-Prioritized Independent Set (PPIS) and Mixed Integer Covering Optimization (MICO), which targets different optimization objectives of computational efficiency and repair accuracy, respectively.

\textbf{Penalty-Prioritized Independent Set. } PPIS is designed from the perspective of computational efficiency and constructs a maximal independent set guided by penalty prioritization.  PPIS sorts tuples according to their penalty scores, preferentially selecting nodes with lower penalties to include in the independent set, and iteratively removes nodes that conflict with them. PPIS achieves a nearly linear time complexity, with its time complexity being penalty scores calculation $O(n)$ + sorting $O(n \log n)$ + greedy removal $O(n + m)$ = $O(n \log n + m)$, where $n$ represents the number of nodes within the component and $m$ represents the number of edges in the component. It is suitable for large-scale data cleaning scenarios, capable of providing high-quality approximate repair solutions within an acceptable time frame. The pseudocode is detailed in Algorithm 2.

\begin{algorithm}[!t]
\small
\caption{Penalty-Prioritized Independent Set}
\label{alg:component_removal}
\begin{algorithmic}[1]
\Require 
  $\mathcal{G}_{conf}$: conflict graph, 
  $\mathcal{M}_{conf}$: adjacency map,
  $density[]$: tuple density values
\Ensure $\mathcal{R}_{remove}$: tuples to remove

\State $\mathcal{C} \gets \text{ConnectedComponents}(\mathcal{G}_{conf})$
\State $\mathcal{R}_{remove} \gets \emptyset$

\For{each $comp \in \mathcal{C}$}
    \If{$\text{IsClique}(comp, \mathcal{G}_{conf})$}
        \State $v_{max} \gets \arg\max_{v \in comp} density[v]$
        \State $\mathcal{R}_{remove} \gets \mathcal{R}_{remove} \cup (comp \setminus \{v_{max}\})$
    \Else
        \State Calculate $\mu_d, \sigma_d$ from $\{density[v] : v \in comp\}$
        \State Calculate $\mu_c, \sigma_c$ from $\{|\mathcal{M}_{conf}[v]| : v \in comp\}$
        \State $CV_d \gets \sigma_d / \mu_d$, $CV_c \gets \sigma_c / \mu_c$
        \State $w_1 \gets 0.5 \times (1 + CV_d)$, $w_2 \gets 0.5 \times (1 + CV_c)$
        \State Normalize $w_1, w_2$ to sum 1, clamp to $[0.1, 0.9]$
        \State For each $v \in comp$: $penalty[v] \gets w_1 \cdot \frac{1}{density[v]+\epsilon} + w_2 \cdot |\mathcal{M}_{conf}[v]|$
        \State Sort $comp$ by $penalty$ ascending
        \State $independentSet \gets \emptyset$, $localRemove \gets \emptyset$
        \For{each $v$ in sorted $comp$}
            \If{$\exists u \in \mathcal{M}_{conf}[v] \cap independentSet$}
                \State $localRemove \gets localRemove \cup \{v\}$
            \Else
                \State $independentSet \gets independentSet \cup \{v\}$
            \EndIf
        \EndFor
        \State $\mathcal{R}_{remove} \gets \mathcal{R}_{remove} \cup localRemove$
    \EndIf
\EndFor
\State \textbf{return} $\mathcal{R}_{remove}$
\end{algorithmic}
\end{algorithm}

\begin{example}
\normalfont
As shown in Figure~\ref{f5}, consider a conflict graph $G=(V,E)$ containing 10 tuples. This study adopts a divide-and-conquer strategy to optimize the data cleaning process, and the procedure is as follows. First, the global conflict graph is decomposed into several connected components using Iterative Depth-First Search, and a density value is precomputed for each tuple and stored in a density array. Different processing strategies are applied based on the type of connected component: For connected components that are cliques, a linear scan of the density array is performed to identify the tuple with the highest density within the clique. Only the tuple $t_0$ with the highest density is retained, while the remaining tuples $\{t_3, t_8, t_4\}$ are moved into the $\texttt{removedSet}$, yielding the minimal removal set for this component. For connected components with a non-clique component, the PPIS method is applied. Specifically, a penalty score is first calculated for each tuple $t_i$. The tuples within the component are then sorted in ascending order of their penalty scores, and the $\texttt{independentSet}$ and $\texttt{removedSet}$ are initialized. Finally, a greedy selection is performed via sequential traversal to progressively construct the maximal independent set and determine the component's minimal removal set,the process is as follows:
\begin{align}
& \text{Move } t_1 \text{ into } independentSet. \\
& \text{Move } t_9 \text{ into } independentSet \ (\text{since } t_9 \text{ does not conflict with } t_1). \\
& \text{Move } t_2 \text{ into } removedSet \ (\text{since } t_2 \text{ conflicts with } t_1). \\
& \text{Move } t_5 \text{ into } removedSet \ (\text{since } t_5 \text{ conflicts with } t_9). \\
& \text{Move } t_6 \text{ into } removedSet \ (\text{since } t_6 \text{ conflicts with both } t_1 \text{ and } t_9). \\
& \text{Move } t_7 \text{ into } independentSet \ (\text{since } t_7 \text{ does not conflict with any tuples in the independent set, i.e., } t_1 \text{ and } t_9).
\end{align}

Finally, the minimal removal set generated by this component is $[t_2, t_5, t_6]$.
By merging all local minimal removal sets from each connected component, the global minimal removal set is obtained as $\{t_2, t_3, t_4, t_5, t_6, t_8\}$.

\begin{figure}[thb]
    \centering
    \includegraphics[width=1\textwidth, height=0.6\textheight, keepaspectratio]{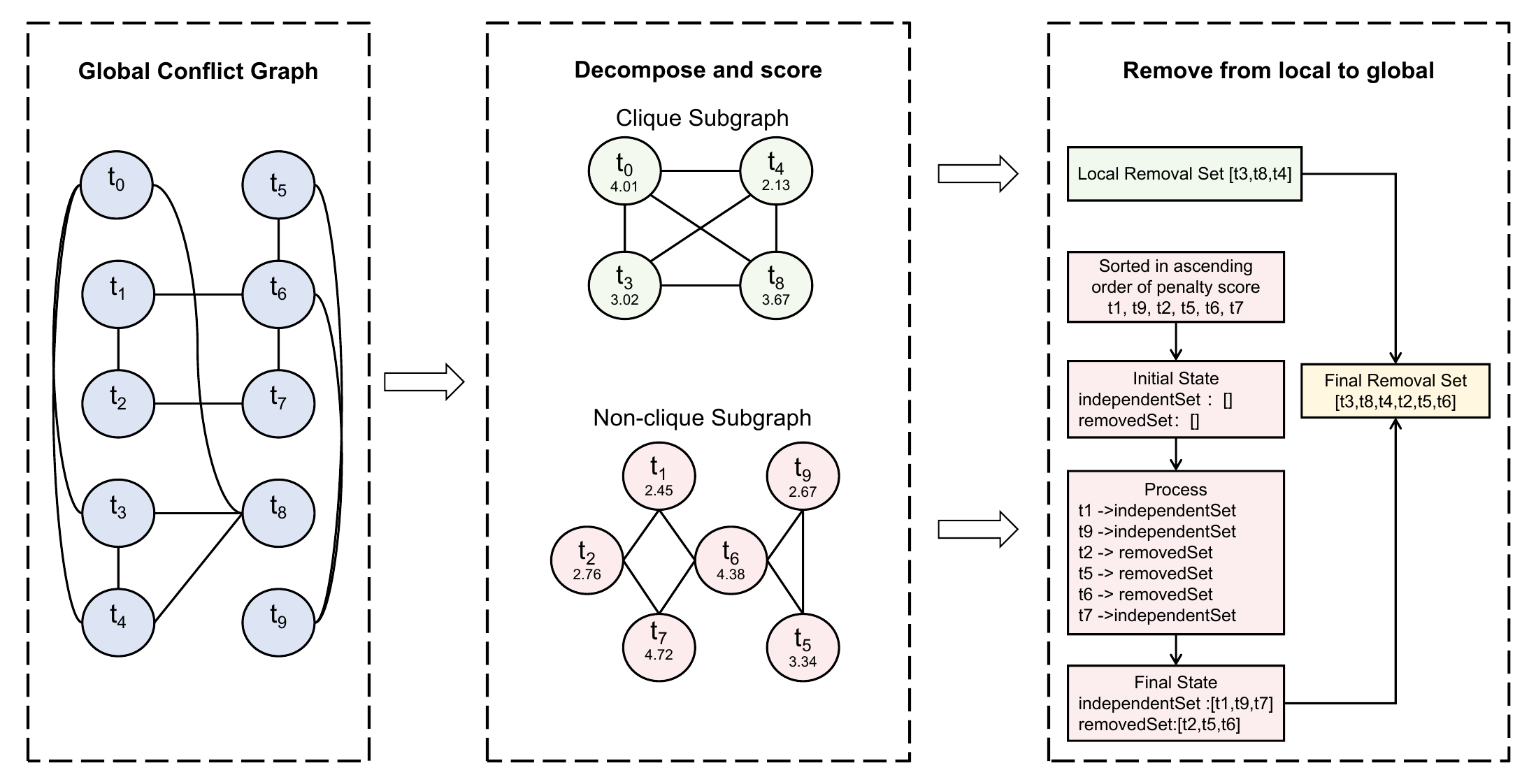}
    \caption{Illustration of Dividing-and-Conquering Removal for clique components and PPIS for non-clique components}
    \label{f5}
\end{figure}

\end{example}

Note that, as an efficient heuristic algorithm, PPIS lacks an approximation ratio guarantee and thus has no strict theoretical assurance. The quality of its solutions depends on the characteristics of the data distribution. Although the dynamic weighting mechanism enhances adaptability, it cannot guarantee convergence to the global optimum.

Although PPIS demonstrates significant advantages in computational efficiency, it suffers from two theoretical deficiencies regarding data repair quality: On one hand, it is prone to local optimum traps, as the greedy strategy makes decisions based solely on the current optimum without backtracking or adjustment. On the other hand, it lacks theoretical guarantees, since existing greedy strategies cannot provide formal assurance on the quality of the obtained solutions.

\textbf{Mixed Integer Covering Optimization. }We therefore propose MICO, which employs a graph decomposition-based mixed integer programming framework. It formalizes the repair problem as a minimum weighted vertex cover problem, independently solving optimization for each connected component. By integrating exact algorithms and heuristic strategies, it ensures solution quality, making it suitable for critical scenarios with stringent accuracy requirements. The core idea is to utilize mathematical programming techniques to solve for the minimum-cost removal set within a limited time frame, while incorporating a robust fallback mechanism to address computational complexity challenges.
 
For each connected component $G_c = (V_c, E_c)$, MICO assigns a removal cost to every tuple $v_i \in V_c$ based on its penalty score. Specifically, let $P_{max}$ denote the maximum penalty value within the current component, and let $\epsilon$ be a small positive constant introduced to ensure numerical stability. The cost function for tuple $v_i$ is defined as:
$$Cost(v_i) = P_{max} - Penalty(v_i) + \epsilon$$

In this formulation, tuples with higher penalty scores receive lower cost values, implying a lower retention value. Consequently, such tuples incur a smaller removal cost and are more likely to be eliminated during repair. This mechanism effectively prioritizes the removal of highly suspicious tuples while preserving tuples with higher data quality.

The decision variables satisfy:

\[
x_i = \begin{cases} 
1 \Rightarrow v_i \text{ will be removed} \\
0 \Rightarrow v_i \text{ will be retained}
\end{cases}
\]

The constraint enforces that at least one variable is set to $1$, ensuring that for every conflict edge, at least one endpoint is removed:

\[
x_i + x_j \geq 1, \quad \forall (v_i, v_j) \in E_c
\]

The optimization objective is to minimize the total cost of removing sets:

\[
\min \sum_{i=1}^{n} \mathrm{Cost}(v_i) \cdot x_i
\]

This model is equivalent to the minimum weighted vertex cover problem, whose physical interpretation is to eliminate all conflicts by minimizing the weighted cost, thereby ensuring data consistency. MICO is implemented using the Gurobi optimization solver, with a time threshold $T\_{max}$ set to control computational overhead. A solution is considered successful if it satisfies the following conditions, where $OPTIMAL$ indicates that the solver finds a global optimal solution within the time limit, and $TIME\_LIMIT$ denotes a timeout case that still returns a feasible (suboptimal) solution:
\[
ValidSolution =
\begin{cases}
OPTIMAL & \Rightarrow \text{Global optimal solution} \\
TIME\_LIMIT \land SolCount > 0 & \Rightarrow \text{Feasible solution}
\end{cases}
\]

To handle solution failures such as infeasibility, numerical instability, or other exceptional cases, MICO triggers a greedy fallback strategy: when the model state is $INFEASIBLE$, $NUMERIC_ERROR$, or $TIME_LIMIT$ with $SolCount=0$, MICO will directly apply PPIS to solve the problem.

\begin{algorithm}[!t]
\small
\caption{Mixed Integer Covering Optimization}
\label{alg:component_mip}
\begin{algorithmic}[1]
\Require 
  $\mathcal{G}_{conf}$: conflict graph, 
  $\mathcal{M}_{conf}$: adjacency map,
  $density[]$: tuple density values
\Ensure $\mathcal{R}_{remove}$: tuples to remove

\State $\mathcal{C} \gets \text{ConnectedComponents}(\mathcal{G}_{conf})$
\State $\mathcal{R}_{remove} \gets \emptyset$

\For{each $comp \in \mathcal{C}$}
    \If{$\text{IsClique}(comp, \mathcal{G}_{conf})$}
        \State $v_{max} \gets \arg\max_{v \in comp} density[v]$
        \State $\mathcal{R}_{remove} \gets \mathcal{R}_{remove} \cup (comp \setminus \{v_{max}\})$
    \Else
        \State Calculate $\mu_d, \sigma_d$ from $\{density[v] : v \in comp\}$
        \State Calculate $\mu_c, \sigma_c$ from $\{|\mathcal{M}_{conf}[v]| : v \in comp\}$
        \State $CV_d \gets \sigma_d / \mu_d$, $CV_c \gets \sigma_c / \mu_c$
        \State $w_1 \gets 0.5 \times (1 + CV_d)$, $w_2 \gets 0.5 \times (1 + CV_c)$
        \State Normalize $w_1, w_2$ to sum 1, clamp to $[0.1, 0.9]$
        \State \textbf{try}
        \State \quad Create MIP model with objective: $\min \sum_{v \in comp} (P_{max} - P(v) + \epsilon) \cdot x_v$
        \State \quad Subject to: $x_u + x_v \geq 1$ for each edge $(u,v) \in \mathcal{G}_{conf}.E$ within $comp$
        \State \quad Solve MIP with time limit $T_{max}$
        \If{solution is optimal or feasible within time limit}
            \State $\mathcal{R}_{remove} \gets \mathcal{R}_{remove} \cup \{v \in comp : x_v = 1\}$
        \Else
            \State \textbf{throw} SolverException
        \EndIf
        \State \textbf{catch} (SolverException or Timeout)
        \State $\mathcal{R}_{greedy} \gets \text{PPIS}(comp, \mathcal{G}_{conf}, w_1, w_2)$
        \State $\mathcal{R}_{remove} \gets \mathcal{R}_{remove} \cup \mathcal{R}_{greedy}$
        \State \textbf{end try}
    \EndIf
\EndFor
\State \textbf{return} $\mathcal{R}_{remove}$
\end{algorithmic}
\end{algorithm}

\textbf{Complexity analysis of MICO. }In the integer programming solving stage, the time complexity for constructing variables and constraints is $O(n + m)$. Solving with Gurobi has a theoretical worst-case exponential time complexity of $O(2^n)$, but under the timeout mechanism, the overall time complexity is effectively $O(n + m + 2^{\min(n, T_{\max})})$, where $n$ is the number of nodes in the connected component, $m$ is the number of edges in the connected component, and $T_{max}$ is the timeout limit (which can be set according to practical requirements). Its space complexity is $O(n + m)$, used for storing decision variables and constraints expressions.  When the timeout rollback is triggered, MICO switches to the PPIS greedy strategy. After fallback, the time complexity directly reduces to that of PPIS, namely $O(n \log n + m)$.

\subsection{Approximate boundary analysis of MICO}This section formally analyzes the approximation performance of the algorithm. By relaxing the problem, only non-conflicting tuples are used as neighbors for density computation, resulting in a relaxed density $\rho'(i)$ that replaces the original density $\rho(i)$. According to Proposition 9 in ~\cite{DBLP:journals/tkde/SunSY24}, the density loss is bounded within $k(r + c)(\rho_{\max} - \rho_{\min})$:
$$|p_{(i)} - p'_{(i)}| \leq \varepsilon = k(r + c)(\rho_{\max} - \rho_{\min})$$
The density loss depends on the problem scale parameters $r, c, m, k$, where $r = |I_R|$ and $c = |I_C|$. Here, $I_C$ denotes the set of conflicting tuple groups, and $I_R$ denotes the set of non-conflicting tuple groups contained in the neighbor set. $k$ represents the number of nearest neighbors. $\rho_{\max}$ and $\rho_{\min}$ are the maximum and minimum density values. Therefore, the penalty score error is $|P(i) - P'(i)| = w_1 \left| \frac{1}{\rho(i)} - \frac{1}{\rho'(i)} \right| \approx w_1 \frac{\varepsilon}{\rho_{small}^2}$, where $\rho_{small}$ is the smaller density value between $\rho(i)$ and $\rho'(i)$. For the remaining tuple set $I_S$, the total penalty score error is proportional to the size of the set $|I_S|$ and is controlled by $\eta$:
$$|penalty(I_S) - penalty'(I_S)| \leq |I_S|\eta, \quad \eta = w_1 \frac{\varepsilon}{\rho_{small}^2}$$

When the MIP is successfully solved, for each connected component, an optimal vertex cover based on the relaxed penalty is obtained, i.e., a minimal removal set $I_n$, such that the remaining tuples $I_S$ satisfy:
$$penalty(I_S) \leq penalty'(S_{\mathrm{opt}})$$
where $S_{\mathrm{opt}}$ denotes the optimal remaining set under the relaxed penalty. Considering the penalty error, we have:
$$penalty(I_S) \leq penalty'(I_S) + |I_S|\eta \leq penalty'(S_{\mathrm{opt}}) + |I_S|\eta \leq penalty(S_{\mathrm{opt}}) + |S_{\mathrm{opt}}|\eta + |I_S|\eta$$
Since $|I_S| \le n$ and $|S_{\mathrm{opt}}| \le n$, where $n$ is the total number of tuples in the dataset, we have:
$$penalty(I_S) \leq penalty(S_{\mathrm{opt}}) + 2n\eta$$

Therefore, the approximate ratio is:
$$\frac{penalty(I_S)}{penalty(S_{\mathrm{opt}})} \leq 1 + \frac{2n\eta}{penalty(S_{\mathrm{opt}})}, \eta = w_1 \frac{k(r + c)(\rho_{\max} - \rho_{\min})}{\rho_{small}^2}$$

\section{Performance evaluation}
In this section, we compare our method with state-of-the-art S-repair approaches, namely HEURISTIC and RELAXATION ~\cite{DBLP:journals/tkde/SunSY24}. All experiments were implemented in Java and conducted on a machine equipped with a 2.1 GHz CPU and 64 GB of memory. Our experiments aim to address the following questions:
(1) How does our algorithm compare to existing S-repair methods?
(2) How does the algorithm performance scale with the number of rows and columns?
(3) How does the choice of the neighborhood parameter $k$ affect algorithm performance?
(4) What is the impact of the number of FDs or CFDs on algorithm performance?
\subsection{Experimental setup.}
\textbf{Datasets. }
This study utilizes nine datasets for experimentation, with detailed information provided in Table~\ref{dataset}. Hospital1k and Flights are two real-world datasets obtained from an earlier research project ~\cite{DBLP:conf/sigmod/MahdaviAFMOS019}, both of which include ground-truth labels. The Movies dataset is taken from the Magellan repository ~\cite{magellandata}, with its true labels coming from existing duplicate tuple annotations in the same prior work ~\cite{DBLP:conf/sigmod/MahdaviAFMOS019}. Hospital10k serves as a common benchmark in data cleaning research, where errors such as spelling mistakes were manually introduced ~\cite{visengeriyeva2018metadata}. The Food dataset contains records of food businesses in Chicago, and approximately 3,000 of its tuples were manually labeled to establish ground truth. For the Soccer dataset, which includes information on football players and teams, the true labels were provided by Rammerlaere and Geerts ~\cite{10.14778/3236187.3236193}. Similarly, the Adult dataset is a well-known census dataset from the UCI repository, and it also comes with ground-truth labels supplied by Rammerlaere and Geerts ~\cite{10.14778/3236187.3236193}. Additionally, the Beers and Restaurant datasets were acquired from a previous project ~\cite{DBLP:conf/cikm/NeutatzMA19}. These contain various types of errors, including missing values, field separation issues, and formatting inconsistencies.

We employ the HYFD algorithm for functional dependency discovery. Table~\ref{dataset} presents the statistical characteristics of the nine datasets, which cover a wide spectrum of data scales (row counts ranging from 1,000 to 32,561) and complexities (number of FDs ranging from 21 to 2,926), establishing a solid foundation for evaluating algorithm generalization capabilities. Among these, the Adult dataset is the largest in scale (32,561 rows) and, additionally, has the most conflicting tuples (32,561). The Movies dataset contains the highest number of FDs (2,926), and Hospital10k has the most erroneous tuples (9,597). This heterogeneous design ensures comprehensive experimentation, effectively testing algorithm robustness under high-dimensional and large-scale data conditions.

Table~\ref{Connectivity Statistics} reports the connectivity statistics of the conflict graph, characterizing its global structural properties in terms of the number of vertices and edges, the number of connected components, and the sizes of the largest and smallest connected components, thereby reflecting how conflicts are distributed and aggregated across the dataset. Table~\ref{Dataset Density and Conflict Statistics} presents the density and conflict statistics of the conflict graph, capturing both conflict intensity and local structural characteristics, including the maximum, minimum, and average conflict degrees, as well as the total, maximum, minimum, and average densities, which together provide a comprehensive view of conflict concentration and structural heterogeneity.

\textbf{Evaluation method. } To demonstrate the effectiveness of S-repair, we evaluate the accuracy of error removal using Precision, Recall, and F1-score. The following metrics are defined: $t_p$ is the number of erroneous tuples correctly removed. $f_p$ is the number of clean tuples incorrectly identified as erroneous. $f_n$ is the number of erroneous tuples not flagged as errors.

\begin{align*}
\text{Recall} &= \frac{t_p}{t_p + f_n} \\
\text{Precision} &= \frac{t_p}{t_p + f_p} \\
\text{F1-score} &= \frac{2 \times \text{Precision} \times \text{Recall}}{\text{Precision} + \text{Recall}}
\end{align*}

\begin{table}[h]
\centering
\small
\caption{Datasets Used in the Experiments}
\label{tab:datasets}
\begin{tabular}{lccccc}
\toprule
\textbf{Dataset} & \textbf{Size} & \textbf{Attributes} & \textbf{FDs} & \textbf{Errors} & \textbf{Conflict Tuples} \\
\midrule
Hospital1k & 1,000 & 19 & 199 & 420 & 1,000 \\
Food & 1,900 & 13 & 184 & 1,167 & 1,836 \\
Flights & 2,376 & 7 & 21 & 1,904 & 2,376 \\
Beers & 2,410 & 10 & 48 & 1,484 & 867 \\
Soccer & 3,880 & 10 & 94 & 2,057 & 3,879 \\
Movies & 7,391 & 29 & 2,926 & 708 & 85 \\
Hospital10k & 10,000 & 17 & 212 & 9,597 & 10,000 \\
Restaurants & 28,787 & 16 & 304 & 4,943 & 28,222 \\
Adult & 32,561 & 16 & 78 & 6,115 & 32,561 \\
\bottomrule
\end{tabular}
\label{dataset}
\end{table}

\begin{table}[h]
\centering
\small
\caption{Conflict Graph Connectivity Statistics}
\begin{tabular}{lccccc}
\toprule
\textbf{Dataset} & \textbf{Nodes} & \textbf{Edges} & \textbf{Connected Components} & \textbf{MaxComponent} & \textbf{MinComponent} \\
\midrule
Hospital1k & 1,000 & 42,300 & 1 & 1,000 & 1,000 \\
Food & 1,836 & 346,197 & 4 & 1,776 & 3 \\
Flights & 2,376 & 373,709 & 1 & 2,376 & 2,376 \\
Beers & 867 & 1,080 & 102 & 62 & 2 \\
Soccer & 3,879 & 72,003 & 2 & 3,871 & 8 \\
Movies & 85 & 441 & 8 & 41 & 2 \\
Hospital10k & 10,000 & 3,123,473 & 1 & 10,000 & 10,000 \\
Restaurants & 28,222 & 503,690 & 1 & 28,222 & 28,222 \\
Adult & 32,561 & 4,588,986 & 1 & 32,561 & 32,561 \\
\bottomrule
\end{tabular}
\label{Connectivity Statistics}
\end{table}

\begin{table}[!h]
\centering
\small
\caption{Statistical Summary of Datasets}
\begin{tabular}{lcccccc}
\hline
\textbf{Dataset} & \textbf{Max Conf-Degree} & \textbf{Min Conf-Degree} & \textbf{Avg Conf-Degree} & \textbf{Max Density} & \textbf{Min Density} & \textbf{Avg Density} \\
\hline
Hospital1k   & 999.000 & 36.000 & 84.600 & 3.000 & 0.3598 & 0.6774 \\
Food         & 998.000 & 1.000  & 377.120 & 0.3949 & 0.0000 & 0.0030 \\
Flights      & 784.000 & 11.000 & 314.570 & 3.000 & 0.0002 & 0.0353 \\
Beers        & 61.000  & 1.000  & 2.490   & 0.8803 & 0.0152 & 0.2381 \\
Soccer       & 195.000 & 6.000  & 37.120  & 3.000 & 0.0776 & 0.3551 \\
Movies       & 27.000  & 1.000  & 10.380  & 1.622 & 0.3198 & 0.6924 \\
Hospital10k  & 2301.000 & 36.000 & 624.690 & 3.000 & 0.0000 & 0.2755 \\
Restaurants  & 2441.000 & 1.000  & 35.690  & 1.049 & 0.0000 & 0.2784 \\
Adult        & 10465.000 & 8.000  & 281.870 & 3.000 & 0.1173 & 1.0305 \\
\hline
\end{tabular}
\label{Dataset Density and Conflict Statistics}
\end{table}

\subsection{Comparison with existing methods. }
Table~\ref{prf} presents a comparison of Precision, Recall, and F1-score across four algorithms on different datasets. The experiments controlled variables with the neighborhood parameter $k$ set to 3 for all algorithms. 

Overall, RELAXATION demonstrates high precision across most datasets, but its recall is generally low. This indicates that RELAXATION is effective at identifying correct positive instances but misses a significant number of true positives. It is more suited to datasets where precision is critical; however, its significant sacrifice in recall limits its overall performance. HEURISTIC shows fluctuating performance across different datasets, with an overall performance that is generally inferior to the other three algorithms. Its stability and generalization ability are relatively limited. The performance of the PPIS algorithm is more stable, maintaining a good balance across different datasets. It performs well in both precision and recall. MICO performs well on multiple datasets and achieves the highest overall F1 score. However, on the Adult dataset, the number of tuples removed by MICO (7,447) significantly exceeds that of the other algorithms, causing its precision and F1 score to be notably lower than PPIS and RELAXATION. This behavior is primarily due to the highly dense conflict graph and weaker constraints in the Adult dataset. Under the hard consistency constraint that all conflict edges must be covered, the MIP solver tends to choose more aggressive covering schemes, deleting a large number of tuples to completely eliminate conflicts, thereby trading higher recall for significant over-cleaning. This suggests that MICO is better suited for scenarios with relatively regular and decomposable conflict graph structures, whereas PPIS, as the default repair strategy, is more robust for highly saturated datasets with complex noise distributions like Adult.

Nevertheless, MICO remains strong across most datasets, achieving a good balance between precision and recall. Therefore, despite its limitations on the Adult dataset, MICO's robustness and effectiveness in other datasets make it a highly stable and efficient algorithm for most application scenarios.

\begin{table}[htbp]
\centering
\caption{Comparison of Precision, Recall, and F1-score for Four Algorithms}
\label{tab:algorithm_comparison}
\scriptsize
\begin{tabular}{lccccccccc}
\toprule
\multirow{2}{*}{Approach} 
& \multicolumn{3}{c}{Food} 
& \multicolumn{3}{c}{Hospital1k} 
& \multicolumn{3}{c}{Hospital10k} \\
\cmidrule(lr){2-4} \cmidrule(lr){5-7} \cmidrule(lr){8-10}
 & Prec. & Rec. & F1 & Prec. & Rec. & F1 & Prec. & Rec. & F1 \\
\midrule
RELAXATION & 0.996 & 0.407 & 0.578 & 0.972 & 0.912 & 0.941 & 0.959 & 0.962 & 0.961 \\
HEURISTIC  & 0.510 & 0.589 & 0.547 & 0.730 & 0.883 & 0.800 & 0.987 & 0.973 & 0.980 \\
PPIS       & 0.979 & 0.405 & 0.573 & 0.969 & 0.886 & 0.925 & 0.964 & 0.922 & 0.943 \\
MICO       & 0.996 & 0.408 & 0.579 & 0.969 & 0.886 & 0.925 & 0.987 & 0.973 & 0.980 \\
\bottomrule
\end{tabular}

\vspace{1em}

\begin{tabular}{lccccccccc}
\toprule
\multirow{2}{*}{Approach} 
& \multicolumn{3}{c}{Movies} 
& \multicolumn{3}{c}{Soccer} 
& \multicolumn{3}{c}{Flights} \\
\cmidrule(lr){2-4} \cmidrule(lr){5-7} \cmidrule(lr){8-10}
 & Prec. & Rec. & F1 & Prec. & Rec. & F1 & Prec. & Rec. & F1 \\
\midrule
RELAXATION & 0.976 & 0.058 & 0.109 & 0.645 & 0.894 & 0.749 & 0.961 & 0.955  & 0.958 \\
HEURISTIC  & 0.909 & 0.057 & 0.106 & 0.627 & 0.887 & 0.735 & 0.961 & 0.953  & 0.957 \\
PPIS       & 0.952 & 0.057 & 0.107 & 0.979 & 0.968 & 0.974 & 0.961 & 0.953  & 0.957 \\
MICO       & 0.976 & 0.058 & 0.109 & 0.973 & 0.961 & 0.967 & 0.955 & 0.910  & 0.932 \\
\bottomrule
\end{tabular}

\vspace{1em}

\begin{tabular}{lccccccccc}
\toprule
\multirow{2}{*}{Approach} 
& \multicolumn{3}{c}{Adult} 
& \multicolumn{3}{c}{Beers} 
& \multicolumn{3}{c}{Restaurants} \\
\cmidrule(lr){2-4} \cmidrule(lr){5-7} \cmidrule(lr){8-10}
 & Prec. & Rec. & F1 & Prec. & Rec. & F1 & Prec. & Rec. & F1 \\
\midrule
RELAXATION & 0.948 & 0.150 & 0.258 &  0.992 & 0.080  & 0.148 & 0.956 & 0.159 & 0.273 \\
HEURISTIC  & 0.253 & 0.330 & 0.287 &  0.976 & 0.083  & 0.153 & 0.863 & 0.157 & 0.266 \\
PPIS       & 0.954 & 0.151 & 0.261 &  0.975 & 0.080  & 0.148 & 0.563 & 0.154  & 0.242 \\
MICO       & 0.253 & 0.330 & 0.287 &  0.959 & 0.079  & 0.146 & 0.934 & 0.155 & 0.265 \\
\bottomrule
\end{tabular}
\label{prf}
\end{table}

Figure~\ref{f6-1} presents the execution times (in milliseconds) of four algorithms. PPIS demonstrates outstanding and consistent efficiency, maintaining the shortest runtime across all datasets, with an average runtime of only 74,236 ms. Its scalability advantages are particularly evident in large-scale scenarios, providing a feasible solution for resource-constrained or real-time demanding applications. In stark contrast, RELAXATION is typically the slowest, with an average runtime as high as 489,567 ms. Especially on datasets with complex constraints, such as Movie, or large-scale datasets like Restaurants, its runtime significantly exceeds that of other algorithms, which limits its practical applicability. HEURISTIC and MICO exhibit moderate efficiency levels, with average runtimes of approximately 289,362 ms and 89,114 ms, respectively. However, the latter often achieves better time efficiency on datasets where it delivers performance comparable to RELAXATION. Comprehensive analysis of both performance and efficiency indicates that MICO effectively controls time costs while maintaining favorable repair performance (e.g., F1 score), achieving a favorable balance between effectiveness and efficiency.

\begin{figure}[h]
    \centering
    \includegraphics[width=1\textwidth, height=0.45\textheight, keepaspectratio]{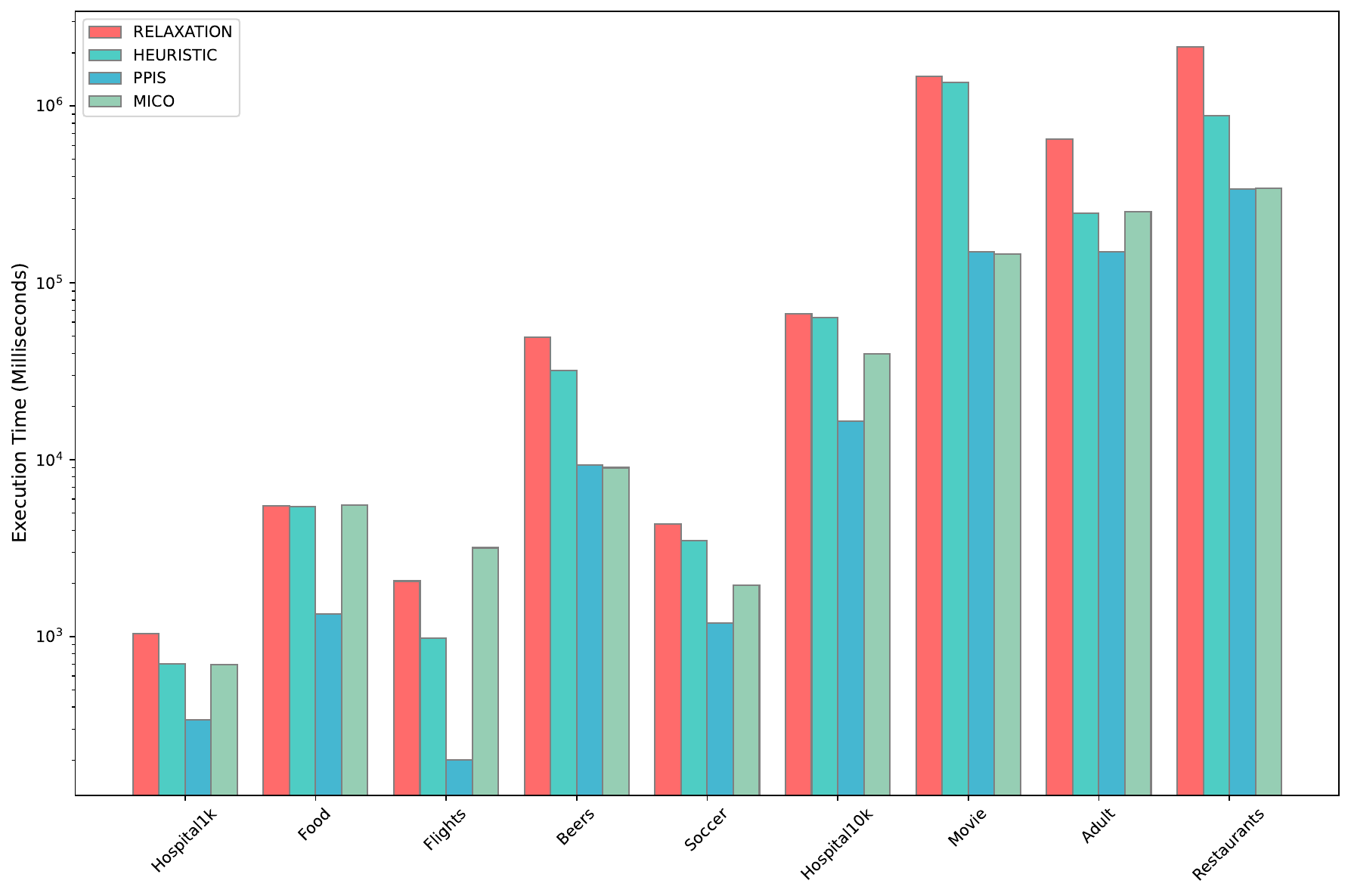}
    \caption{The running time of four algorithms (ms)}
    \label{f6-1}
\end{figure}

Table~\ref{number} compares the number of tuples removed by each algorithm, where the removal count reflects the algorithm's ability to detect and eliminate erroneous tuples. Ideally, this number should approximate the actual number of errors. In most datasets, our proposed PPIS and MICO algorithms tend to reduce unnecessary removals through more precise conflict localization. For example, on the Soccer dataset, both methods remove far fewer tuples (around 2,035) compared with RELAXATION and HEURISTIC (around 2,849), demonstrating higher repair accuracy. On the Adult dataset, however, MICO removes an unusually large number of tuples (7,447), far exceeding the other algorithms. This likely results from a more aggressive covering strategy triggered by the dataset’s complex dependencies and implicit conflicts, which secures high recall but also leads to over-cleaning. Overall, the number of removed tuples is influenced by the error density and conflict structure of each dataset. PPIS and MICO generally show better balancing ability, achieving effective repairs while controlling deletion volume as much as possible, which is valuable in practice for preserving data integrity and usability.
\begin{table}[htbp]
\centering
\caption{Comparison of Removed Tuple Counts}
\label{tab:tuple_removal_comparison}
\scriptsize  
\begin{tabular}{lccccccccc}
\toprule
\textbf{Approach} & \textbf{Food} & \textbf{Hospital1k} & \textbf{Hospital10k} & \textbf{Movies} & \textbf{Soccer} & \textbf{Flights} & \textbf{Adult} & \textbf{Beers} & \textbf{Restaurants} \\
\midrule
RELAXATION & 477 & 394 & 9,629 & 42 & 2,849 & 1891 & 654 & 120 & 824 \\
HEURISTIC  & 1,346 & 508 & 9,460 & 44 & 2,908 & 1888 & 705 & 126 & 900 \\
PPIS       & 483 & 384 & 9,170 & 42 & 2,035 & 1888 & 711 & 122 & 1,350 \\
MICO       & 478 & 384 & 9,460 & 42 & 2,032 & 1814 & 7,447 & 122 & 818 \\
\bottomrule
\end{tabular}
\label{number}
\end{table}

Table~\ref{rate} reports the proportion of clean tuples preserved by each algorithm. The results show that PPIS (average preservation rate: 88.32\%) and MICO (average preservation rate: 91.56\%) generally achieve higher preservation rates than HEURISTIC (average preservation rate: 73.12\%) and RELAXATION (average preservation rate: 80.99\%) across most datasets, especially in those containing clustered errors. The results reveal a fundamental limitation of RELAXATION and HEURISTIC: both rely solely on local density patterns to determine erroneous tuples, meaning that a tuple involved in many local conflicts is automatically treated as suspicious. However, in real-world dirty datasets, errors often appear in clustered forms (dirty clusters), where a group of corrupted tuples densely reference each other. Under such environments, clean tuples located near or partially overlapping these clusters may be incorrectly identified as errors simply due to local overlap. This leads to severe over-cleaning, as seen in Soccer where RELAXATION and HEURISTIC mistakenly remove a large proportion of clean tuples.

In contrast, the Density-Conflict Degree Penalty Score metric employed by PPIS and MICO introduces the concept of conflict degree to capture the global conflict network structure among data, thereby enhancing the perception of data topology and overcoming the limitation of traditional density metrics that only reflect local neighborhood relationships. Specifically, the Density-Conflict Degree Penalty Score exhibits the following corrective characteristics: for tuples with low density but low conflict, the conflict degree term neutralizes the penalty, preventing erroneous deletion. Conversely, tuples with high density but consistently violate multiple constraints or act as central anchors within dirty clusters are assigned higher conflict penalties and are selectively removed.This explains why PPIS and MICO demonstrate better clean data retention capabilities. By quantifying the severity of conflicts, they reduce the risk of large-scale misdeletions, effectively preserving clean tuples that contain correct values but are involved in conflicts due to erroneous tuples.

However, PPIS and MICO do not dominate in all cases. A notable example is the Hospital10k dataset, where HEURISTIC preserves a larger proportion of clean data than PPIS although both are heuristic-driven, and achieves a preservation rate that is comparable to MICO. This behavior is closely tied to the structural properties of the conflict graph. HEURISTIC, relying primarily on local density measurement, tends to retain tuples that participate in only a small number of local conflicts, which in Hospital10k are often clean tuples positioned at the periphery of dense error clusters. In contrast, while PPIS introduces the concept of global conflict degree, its heavy reliance on global conflict information in datasets with complex conflict structures, such as Hospital10k, may lead to misdeletions. Specifically, PPIS tends to delete tuples that are associated with multiple conflicts, including clean data that only partially overlap with error clusters. Due to the overweighting of global conflict degree in such high‑density conflict environments, this can result in the unintended removal of clean data that should have been retained. 

Therefore, the reason HEURISTIC achieves better preservation on Hospital10k likely lies in its focus on local conflict detection rather than comprehensive calculation of global conflict structures. Although PPIS  can identify more complex conflict patterns, it fails to effectively distinguish truly clean data from data that are merely locally affected by errors when the relationship between errors and conflict degree in the conflict graph is weak. This over‑reliance on the global structure of the conflict graph leads to the loss of clean data. MICO, however, behaves differently. As a global optimization framework, its mixed-integer formulation searches the conflict graph more holistically, enabling it to identify a solution structure that, in this particular dataset, more closely resembles the behavior of HEURISTIC. This allows MICO to avoid some of the cascading misdeletions observed in PPIS and to preserve a substantially larger fraction of clean data.

In summary, the effectiveness of PPIS and MICO lies not in conservative deletion, but in their global conflict-aware selectivity achieved through the introduction of conflict degree. By distinguishing core violators from clean tuples merely influenced by local dirty clusters, PPIS and MICO substantially reduce false deletions and provide a robust solution to the over-cleaning problem inherent in traditional S-repair techniques.

\begin{table}[htbp]
\centering
\caption{Clean Data Retention Rate (\%)}
\label{tab:clean_data_retention}
\scriptsize
\begin{tabular}{lccccccccc}
\toprule
\textbf{Approach} & \textbf{Food} & \textbf{Hospital1k} & \textbf{Hospital10k} & \textbf{Movies} & \textbf{Soccer} & \textbf{Flights} & \textbf{Adult} & \textbf{Beers} & \textbf{Restaurants} \\
\midrule
RELAXATION & 99.73 & 98.10 & 2.48 & 99.99 & 44.51 & 84.53 & 99.85 & 99.89 & 99.85 \\
HEURISTIC & 10.10 & 76.38 & 69.98 & 99.94 & 40.50 & 84.53 & 77.51 &  99.68 & 99.48 \\
PPIS & 98.64 & 97.93 & 19.11 & 99.97 & 97.64 & 84.53 & 99.86 & 99.68 & 99.53 \\
MICO & 99.73 & 97.93 & 69.98 & 99.99 & 96.98 & 82.84 & 77.51 & 99.46 & 99.77 \\
\bottomrule
\end{tabular}
\label{rate}
\end{table}

\subsection{The performance of our methods. }
As shown in Figure~\ref{f6-2}, the experimental results indicate that both methods maintain strong stability in F1-score across different values of $k$, with overall performance remaining high and minimally affected by changes in $k$. This demonstrates that both methods exhibit good robustness in terms of accuracy.
Regarding computational efficiency, the execution time of both methods generally increases as $k$ grows. This trend is primarily due to the larger number of candidate elements that the algorithms must process with increasing $k$, which leads to higher time costs. Notably, MICO consistently requires more execution time than PPIS.
In terms of the penalty associated with remaining tuples, the total penalty decreases continuously as $k$ increases for both methods, indicating that larger $k$ values help cover more potentially important tuples and thus reduce the penalty from omissions. This effect is particularly pronounced when $k$ is small, while the rate of penalty reduction gradually slows once $k$ exceeds a certain threshold, reflecting a diminishing marginal benefit.
Overall, the choice of $k$ involves a trade-off between model performance, computational efficiency, and coverage completeness, and should be configured appropriately based on the specific application scenario.
\begin{figure}[h]
    \centering
    \includegraphics[width=1\textwidth, height=1\textheight, keepaspectratio]{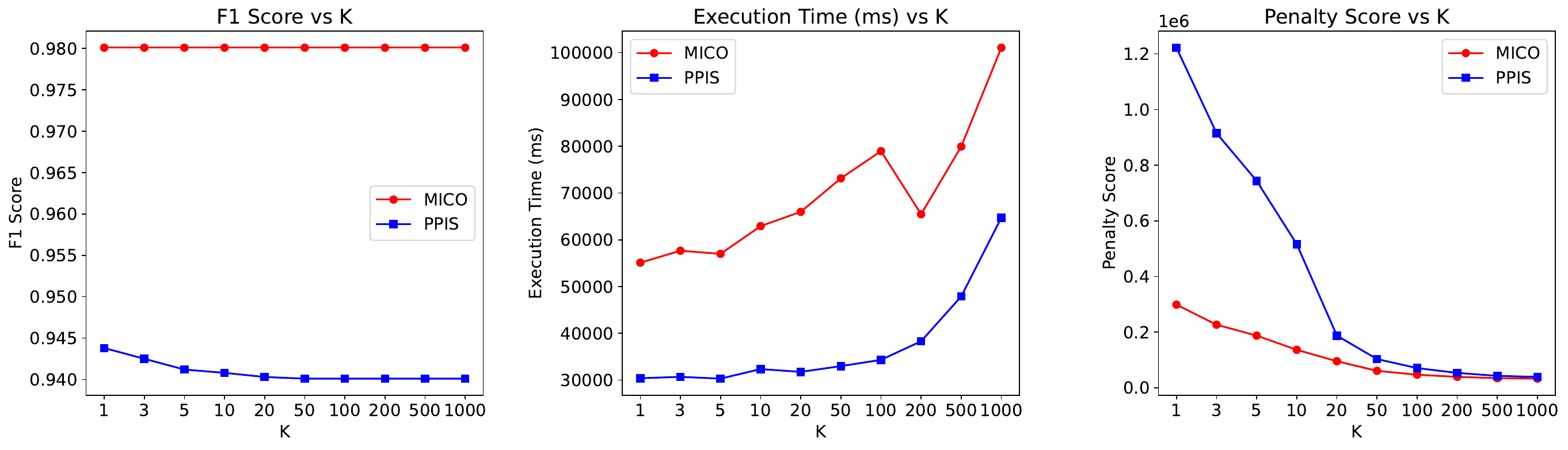}
    \caption{F1, running time, and penalty score at different k values}
    \label{f6-2}
\end{figure}

Figure~\ref{f6-3} illustrates the changes in execution time for the MICO and PPIS algorithms under different parameter settings. The left subplot shows that as the number of columns increases, the execution time of MICO exhibits a fluctuating trend, whereas PPIS demonstrates a relatively stable upward trend, with only minor fluctuations at certain points. The right subplot shows that as the number of rows increases, the execution time of both algorithms continues to rise. However, MICO's execution time grows significantly faster than PPIS, particularly in regions with a high number of rows, where MICO's runtime increases sharply while PPIS maintains a relatively gradual growth. This indicates that PPIS may offer better performance stability when handling large-scale data.

\begin{figure}[h]
    \centering
    \includegraphics[width=1\textwidth, height=0.6\textheight, keepaspectratio]{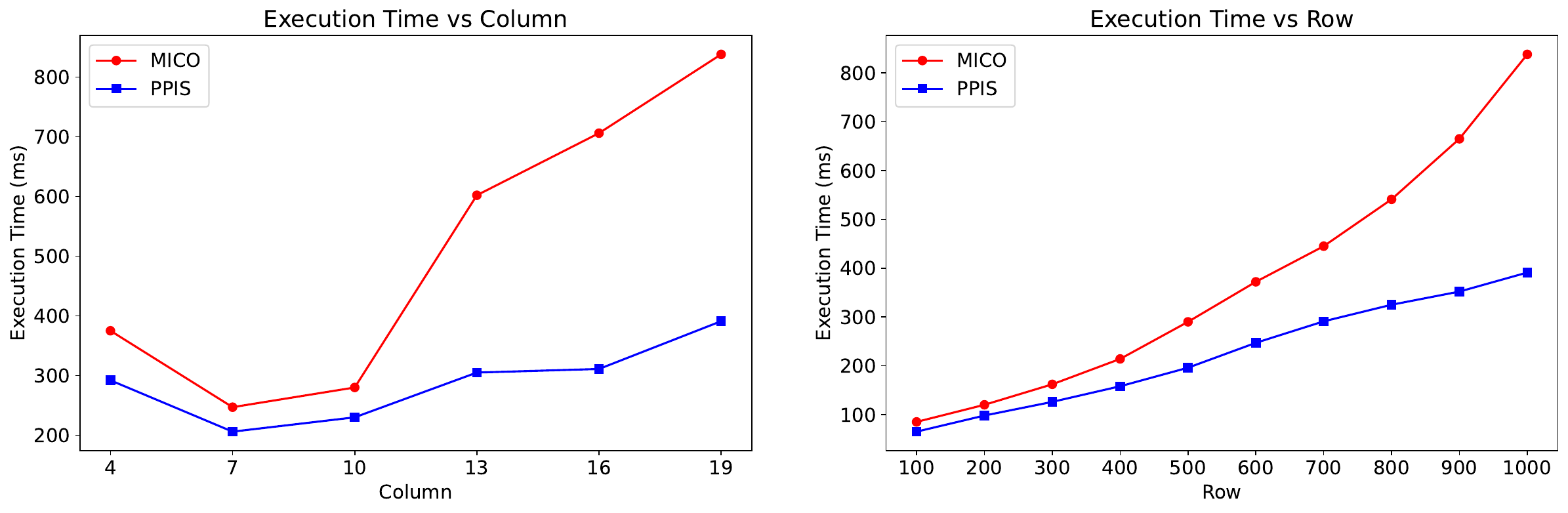}
    \caption{Running time under different rows and columns}
    \label{f6-3}
\end{figure}

\section{Conclusion}
This study addresses the limitations of traditional density-based subset repair methods in scenarios with clustered dirty data, including biased density estimation, computational inefficiency, and homogeneity of attribute weights. We propose an approximately optimal subset repair framework based on a density–conflict degree penalty. By integrating information entropy with conditional functional dependency (CFD) rules, we construct a dynamically weighted density model (EntroCFDensity), and, combined with a conflict-degree metric to quantify each tuple's impact on global consistency, the framework effectively balances sensitivity to local similarity and global conflict topology.

A conflict-graph divide-and-conquer strategy is innovatively designed: during conflict detection, an attribute inverted index and CFD rule grouping mechanism are introduced, significantly reducing detection complexity; during repair, the NP-hard global optimization problem is decomposed into independently processed connected components. For clique components, a penalty-driven optimal retention strategy is applied, while for non-clique components, an efficiency-prioritized PPIS greedy algorithm and a precision-oriented MICO mixed-integer programming method are employed, achieving a coordinated optimization of computational efficiency and repair quality.

Experimental results show that the proposed methods outperform the latest density-based S-repair approaches across multiple datasets. The PPIS algorithm maintains high F1 performance while achieving near-linear time complexity, making it particularly suitable for large-scale data cleaning. The MICO algorithm provides strict theoretical guarantees, enabling reliable decision making in critical application scenarios. Both methods achieve a significantly improved balance between tuple retention and error removal, thereby demonstrating the effectiveness of the conflict-degree metric in correcting bias caused by clustered dirty data. Overall, within the same topology-aware approximate subset-repair framework based on the joint density–conflict penalty score, PPIS offers a high-efficiency greedy solution that scales to large datasets, while MICO provides a mixed-integer programming formulation with stronger theoretical assurances. 

This study provides a new approach for constraint-driven data cleaning that balances efficiency and accuracy. Future work will explore several avenues to further enhance the flexibility, robustness, and applicability of the proposed framework.

\section*{Acknowledgment}
This work was supported in part by National Natural Science Foundation of China grant no. 62402135, U21A20513, Taishan Scholars Program of Shandong Province grant no. tsqn202211091, Shandong Provincial Natural Science Foundation grant no. ZR2023QF059.

\bibliographystyle{plainnat}
\bibliography{reference}

\begin{thebibliography}{37}
\providecommand{\natexlab}[1]{#1}
\providecommand{\url}[1]{\texttt{#1}}
\expandafter\ifx\csname urlstyle\endcsname\relax
  \providecommand{\doi}[1]{doi: #1}\else
  \providecommand{\doi}{doi: \begingroup \urlstyle{rm}\Url}\fi

\bibitem[Abdelaal et~al.(2024)Abdelaal, Ktitarev, St{\"{a}}dtler, and
  Sch{\"{o}}ning]{DBLP:conf/edbt/0001KSS24}
Mohamed Abdelaal, Tim Ktitarev, Daniel St{\"{a}}dtler, and Harald
  Sch{\"{o}}ning.
\newblock {SAGED:} few-shot meta learning for tabular data error detection.
\newblock In Letizia Tanca, Qiong Luo, Giuseppe Polese, Loredana Caruccio,
  Xavier Oriol, and Donatella Firmani, editors, \emph{Proceedings 27th
  International Conference on Extending Database Technology, {EDBT} 2024,
  Paestum, Italy, March 25 - March 28}, pages 386--398. OpenProceedings.org,
  2024.
\newblock \doi{10.48786/EDBT.2024.34}.
\newblock URL \url{https://doi.org/10.48786/edbt.2024.34}.

\bibitem[Armstrong(1974)]{DBLP:conf/ifip/Armstrong74}
William~Ward Armstrong.
\newblock Dependency structures of data base relationships.
\newblock In Jack~L. Rosenfeld, editor, \emph{Information Processing,
  Proceedings of the 6th {IFIP} Congress 1974, Stockholm, Sweden, August 5-10,
  1974}, pages 580--583. North-Holland, 1974.

\bibitem[Bao et~al.(2024)Bao, Binbin, Fan, Li, Li, Lin, Lin, Liu, Liu, Lv,
  Ouyang, Sun, Tang, Wang, Wei, Wu, Xie, Zhang, Zhao, Zhu, and
  Zhu]{DBLP:journals/pvldb/BaoBFLLLLLLLOSTWWWXZZZZ24}
Zian Bao, Bie Binbin, Wenfei Fan, Daji Li, Mengyun Li, Kaiwen Lin, Wei Lin,
  Peijie Liu, Peng Liu, Zhicong Lv, Mingliang Ouyang, Chenyang Sun, Shuai Tang,
  Yaoshu Wang, Qiyuan Wei, Xiangqian Wu, Min Xie, Jing Zhang, Runxiao Zhao, Jie
  Zhu, and Yilin Zhu.
\newblock Rock: Cleaning data with both {ML} and logic rules.
\newblock \emph{Proc. {VLDB} Endow.}, 17\penalty0 (12):\penalty0 4373--4376,
  2024.
\newblock \doi{10.14778/3685800.3685878}.
\newblock URL \url{https://www.vldb.org/pvldb/vol17/p4373-wang.pdf}.

\bibitem[Bohannon et~al.(2007)Bohannon, Fan, Geerts, Jia, and
  Kementsietsidis]{DBLP:conf/icde/BohannonFGJK07}
Philip Bohannon, Wenfei Fan, Floris Geerts, Xibei Jia, and Anastasios
  Kementsietsidis.
\newblock Conditional functional dependencies for data cleaning.
\newblock In Rada Chirkova, Asuman Dogac, M.~Tamer {\"{O}}zsu, and Timos~K.
  Sellis, editors, \emph{Proceedings of the 23rd International Conference on
  Data Engineering, {ICDE} 2007, The Marmara Hotel, Istanbul, Turkey, April
  15-20, 2007}, pages 746--755. {IEEE} Computer Society, 2007.
\newblock \doi{10.1109/ICDE.2007.367920}.
\newblock URL \url{https://doi.org/10.1109/ICDE.2007.367920}.

\bibitem[Breve et~al.(2022)Breve, Caruccio, Deufemia, and
  Polese]{DBLP:conf/edbt/BreveCDP22}
Bernardo Breve, Loredana Caruccio, Vincenzo Deufemia, and Giuseppe Polese.
\newblock {RENUVER:} {A} missing value imputation algorithm based on relaxed
  functional dependencies.
\newblock In Julia Stoyanovich, Jens Teubner, Paolo Guagliardo, Milos Nikolic,
  Andreas Pieris, Jan M{\"{u}}hlig, Fatma {\"{O}}zcan, Sebastian Schelter,
  H.~V. Jagadish, and Meihui Zhang, editors, \emph{Proceedings of the 25th
  International Conference on Extending Database Technology, {EDBT} 2022,
  Edinburgh, UK, March 29 - April 1, 2022}, pages 1:52--1:64.
  OpenProceedings.org, 2022.
\newblock \doi{10.5441/002/EDBT.2022.05}.
\newblock URL \url{https://doi.org/10.5441/002/edbt.2022.05}.

\bibitem[Chen et~al.(2011)Chen, Chen, Conway, Hellerstein, and
  Parikh]{DBLP:journals/tkde/ChenCCHP11}
Kuang Chen, Harr Chen, Neil Conway, Joseph~M. Hellerstein, and Tapan~S. Parikh.
\newblock Usher: Improving data quality with dynamic forms.
\newblock \emph{{IEEE} Trans. Knowl. Data Eng.}, 23\penalty0 (8):\penalty0
  1138--1153, 2011.
\newblock \doi{10.1109/TKDE.2011.31}.
\newblock URL \url{https://doi.org/10.1109/TKDE.2011.31}.

\bibitem[Chomicki and Marcinkowski(2005)]{DBLP:journals/iandc/ChomickiM05}
Jan Chomicki and Jerzy Marcinkowski.
\newblock Minimal-change integrity maintenance using tuple deletions.
\newblock \emph{Inf. Comput.}, 197\penalty0 (1-2):\penalty0 90--121, 2005.
\newblock \doi{10.1016/J.IC.2004.04.007}.
\newblock URL \url{https://doi.org/10.1016/j.ic.2004.04.007}.

\bibitem[Chu et~al.(2013)Chu, Ilyas, and Papotti]{DBLP:conf/icde/ChuIP13}
Xu~Chu, Ihab~F. Ilyas, and Paolo Papotti.
\newblock Holistic data cleaning: Putting violations into context.
\newblock In Christian~S. Jensen, Christopher~M. Jermaine, and Xiaofang Zhou,
  editors, \emph{29th {IEEE} International Conference on Data Engineering,
  {ICDE} 2013, Brisbane, Australia, April 8-12, 2013}, pages 458--469. {IEEE}
  Computer Society, 2013.
\newblock \doi{10.1109/ICDE.2013.6544847}.
\newblock URL \url{https://doi.org/10.1109/ICDE.2013.6544847}.

\bibitem[Dallachiesa et~al.(2013)Dallachiesa, Ebaid, Eldawy, Elmagarmid, Ilyas,
  Ouzzani, and Tang]{DBLP:conf/sigmod/DallachiesaEEEIOT13}
Michele Dallachiesa, Amr Ebaid, Ahmed Eldawy, Ahmed~K. Elmagarmid, Ihab~F.
  Ilyas, Mourad Ouzzani, and Nan Tang.
\newblock {NADEEF:} a commodity data cleaning system.
\newblock In Kenneth~A. Ross, Divesh Srivastava, and Dimitris Papadias,
  editors, \emph{Proceedings of the {ACM} {SIGMOD} International Conference on
  Management of Data, {SIGMOD} 2013, New York, NY, USA, June 22-27, 2013},
  pages 541--552. {ACM}, 2013.
\newblock \doi{10.1145/2463676.2465327}.
\newblock URL \url{https://doi.org/10.1145/2463676.2465327}.

\bibitem[Das et~al.()Das, Doan, Paul~Suganthan, Gokhale, Konda, Govind, and
  Paulsen]{magellandata}
Sanjib Das, AnHai Doan, G.~C. Paul~Suganthan, Chaitanya Gokhale, Pradap Konda,
  Yash Govind, and Derek Paulsen.
\newblock The magellan data repository.
\newblock \url{https://sites.google.com/site/anhaidgroup/projects/data}.
\newblock Accessed: 2025-01-01.

\bibitem[Ding et~al.(2022)Ding, Wang, Su, Wang, Li, and
  Gao]{DBLP:journals/tkde/DingWSWLG22}
Xiaoou Ding, Hongzhi Wang, Jiaxuan Su, Muxian Wang, Jianzhong Li, and Hong Gao.
\newblock Leveraging currency for repairing inconsistent and incomplete data.
\newblock \emph{{IEEE} Trans. Knowl. Data Eng.}, 34\penalty0 (3):\penalty0
  1288--1302, 2022.
\newblock \doi{10.1109/TKDE.2020.2992456}.
\newblock URL \url{https://doi.org/10.1109/TKDE.2020.2992456}.

\bibitem[Geerts et~al.(2013)Geerts, Mecca, Papotti, and
  Santoro]{DBLP:journals/pvldb/GeertsMPS13}
Floris Geerts, Giansalvatore Mecca, Paolo Papotti, and Donatello Santoro.
\newblock The {LLUNATIC} data-cleaning framework.
\newblock \emph{Proc. {VLDB} Endow.}, 6\penalty0 (9):\penalty0 625--636, 2013.
\newblock \doi{10.14778/2536360.2536363}.
\newblock URL \url{http://www.vldb.org/pvldb/vol6/p625-mecca.pdf}.

\bibitem[Hao et~al.(2017)Hao, Tang, Li, He, Ta, and
  Feng]{DBLP:journals/tkde/Hao0LHTF17}
Shuang Hao, Nan Tang, Guoliang Li, Jian He, Na~Ta, and Jianhua Feng.
\newblock A novel cost-based model for data repairing.
\newblock \emph{{IEEE} Trans. Knowl. Data Eng.}, 29\penalty0 (4):\penalty0
  727--742, 2017.
\newblock \doi{10.1109/TKDE.2016.2637928}.
\newblock URL \url{https://doi.org/10.1109/TKDE.2016.2637928}.

\bibitem[Heidari et~al.(2019)Heidari, McGrath, Ilyas, and
  Rekatsinas]{DBLP:conf/sigmod/HeidariMIR19}
Alireza Heidari, Joshua McGrath, Ihab~F. Ilyas, and Theodoros Rekatsinas.
\newblock Holodetect: Few-shot learning for error detection.
\newblock In Peter~A. Boncz, Stefan Manegold, Anastasia Ailamaki, Amol
  Deshpande, and Tim Kraska, editors, \emph{Proceedings of the 2019
  International Conference on Management of Data, {SIGMOD} Conference 2019,
  Amsterdam, The Netherlands, June 30 - July 5, 2019}, pages 829--846. {ACM},
  2019.
\newblock \doi{10.1145/3299869.3319888}.
\newblock URL \url{https://doi.org/10.1145/3299869.3319888}.

\bibitem[Ilyas and Chu(2015)]{DBLP:journals/ftdb/IlyasC15}
Ihab~F. Ilyas and Xu~Chu.
\newblock Trends in cleaning relational data: Consistency and deduplication.
\newblock \emph{Found. Trends Databases}, 5\penalty0 (4):\penalty0 281--393,
  2015.
\newblock \doi{10.1561/1900000045}.
\newblock URL \url{https://doi.org/10.1561/1900000045}.

\bibitem[Koehler and Link(2022)]{DBLP:journals/tkde/KoehlerL22}
Henning Koehler and Sebastian Link.
\newblock Possibilistic data cleaning.
\newblock \emph{{IEEE} Trans. Knowl. Data Eng.}, 34\penalty0 (12):\penalty0
  5939--5950, 2022.
\newblock \doi{10.1109/TKDE.2021.3062318}.
\newblock URL \url{https://doi.org/10.1109/TKDE.2021.3062318}.

\bibitem[Lew et~al.(2021)Lew, Agrawal, Sontag, and
  Mansinghka]{DBLP:conf/aistats/LewASM21}
Alexander~K. Lew, Monica Agrawal, David~A. Sontag, and Vikash Mansinghka.
\newblock Pclean: Bayesian data cleaning at scale with domain-specific
  probabilistic programming.
\newblock In Arindam Banerjee and Kenji Fukumizu, editors, \emph{The 24th
  International Conference on Artificial Intelligence and Statistics, {AISTATS}
  2021, April 13-15, 2021, Virtual Event}, volume 130 of \emph{Proceedings of
  Machine Learning Research}, pages 1927--1935. {PMLR}, 2021.
\newblock URL \url{http://proceedings.mlr.press/v130/lew21a.html}.

\bibitem[Li et~al.(2021)Li, Rao, Blase, Zhang, Chu, and
  Zhang]{DBLP:conf/icde/LiRBZCZ21}
Peng Li, Xi~Rao, Jennifer Blase, Yue Zhang, Xu~Chu, and Ce~Zhang.
\newblock Cleanml: {A} study for evaluating the impact of data cleaning on {ML}
  classification tasks.
\newblock In \emph{37th {IEEE} International Conference on Data Engineering,
  {ICDE} 2021, Chania, Greece, April 19-22, 2021}, pages 13--24. {IEEE}, 2021.
\newblock \doi{10.1109/ICDE51399.2021.00009}.
\newblock URL \url{https://doi.org/10.1109/ICDE51399.2021.00009}.

\bibitem[Liang et~al.(2025)Liang, Su, Song, and
  Li]{DBLP:journals/tkde/LiangSSL25}
Kenny~Ye Liang, Yunxiang Su, Shaoxu Song, and Chunping Li.
\newblock Turn waste into wealth: On efficient clustering and cleaning over
  dirty data.
\newblock \emph{{IEEE} Trans. Knowl. Data Eng.}, 37\penalty0 (7):\penalty0
  4361--4372, 2025.
\newblock \doi{10.1109/TKDE.2025.3564313}.
\newblock URL \url{https://doi.org/10.1109/TKDE.2025.3564313}.

\bibitem[Liu et~al.(2024)Liu, Shen, Ghosh, Gilad, Kimelfeld, and
  Roy]{DBLP:journals/pvldb/LiuSGGKR24}
Yuxi Liu, Fangzhu Shen, Kushagra Ghosh, Amir Gilad, Benny Kimelfeld, and
  Sudeepa Roy.
\newblock The cost of representation by subset repairs.
\newblock \emph{Proc. {VLDB} Endow.}, 18\penalty0 (2):\penalty0 475--487, 2024.
\newblock URL \url{https://www.vldb.org/pvldb/vol18/p475-liu.pdf}.

\bibitem[Liu et~al.(2022)Liu, Zhou, and Rekatsinas]{DBLP:journals/vldb/LiuZR22}
Zifan Liu, Zhechun Zhou, and Theodoros Rekatsinas.
\newblock Picket: guarding against corrupted data in tabular data during
  learning and inference.
\newblock \emph{{VLDB} J.}, 31\penalty0 (5):\penalty0 927--955, 2022.
\newblock \doi{10.1007/S00778-021-00699-W}.
\newblock URL \url{https://doi.org/10.1007/s00778-021-00699-w}.

\bibitem[Livshits et~al.(2020)Livshits, Kimelfeld, and
  Roy]{DBLP:journals/tods/LivshitsKR20}
Ester Livshits, Benny Kimelfeld, and Sudeepa Roy.
\newblock Computing optimal repairs for functional dependencies.
\newblock \emph{{ACM} Trans. Database Syst.}, 45\penalty0 (1):\penalty0
  4:1--4:46, 2020.
\newblock \doi{10.1145/3360904}.
\newblock URL \url{https://doi.org/10.1145/3360904}.

\bibitem[Mahdavi et~al.(2019)Mahdavi, Abedjan, Fernandez, Madden, Ouzzani,
  Stonebraker, and Tang]{DBLP:conf/sigmod/MahdaviAFMOS019}
Mohammad Mahdavi, Ziawasch Abedjan, Raul~Castro Fernandez, Samuel Madden,
  Mourad Ouzzani, Michael Stonebraker, and Nan Tang.
\newblock Raha: {A} configuration-free error detection system.
\newblock In Peter~A. Boncz, Stefan Manegold, Anastasia Ailamaki, Amol
  Deshpande, and Tim Kraska, editors, \emph{Proceedings of the 2019
  International Conference on Management of Data, {SIGMOD} Conference 2019,
  Amsterdam, The Netherlands, June 30 - July 5, 2019}, pages 865--882. {ACM},
  2019.
\newblock \doi{10.1145/3299869.3324956}.
\newblock URL \url{https://doi.org/10.1145/3299869.3324956}.

\bibitem[Miao et~al.(2020)Miao, Cai, Li, Gao, and
  Liu]{DBLP:journals/pvldb/MiaoCLGL20}
Dongjing Miao, Zhipeng Cai, Jianzhong Li, Xiangyu Gao, and Xianmin Liu.
\newblock The computation of optimal subset repairs.
\newblock \emph{Proc. {VLDB} Endow.}, 13\penalty0 (11):\penalty0 2061--2074,
  2020.
\newblock URL \url{http://www.vldb.org/pvldb/vol13/p2061-miao.pdf}.

\bibitem[Miao et~al.(2023)Miao, Zhang, Li, Wang, and
  Cai]{DBLP:journals/vldb/MiaoZLWC23}
Dongjing Miao, Pengfei Zhang, Jianzhong Li, Ye~Wang, and Zhipeng Cai.
\newblock Approximation and inapproximability results on computing optimal
  repairs.
\newblock \emph{{VLDB} J.}, 32\penalty0 (1):\penalty0 173--197, 2023.
\newblock \doi{10.1007/S00778-022-00738-0}.
\newblock URL \url{https://doi.org/10.1007/s00778-022-00738-0}.

\bibitem[Neutatz et~al.(2019)Neutatz, Mahdavi, and
  Abedjan]{DBLP:conf/cikm/NeutatzMA19}
Felix Neutatz, Mohammad Mahdavi, and Ziawasch Abedjan.
\newblock {ED2:} {A} case for active learning in error detection.
\newblock In Wenwu Zhu, Dacheng Tao, Xueqi Cheng, Peng Cui, Elke~A.
  Rundensteiner, David Carmel, Qi~He, and Jeffrey~Xu Yu, editors,
  \emph{Proceedings of the 28th {ACM} International Conference on Information
  and Knowledge Management, {CIKM} 2019, Beijing, China, November 3-7, 2019},
  pages 2249--2252. {ACM}, 2019.
\newblock \doi{10.1145/3357384.3358129}.
\newblock URL \url{https://doi.org/10.1145/3357384.3358129}.

\bibitem[Pena et~al.(2021)Pena, de~Almeida, and
  Naumann]{DBLP:journals/pvldb/PenaAN21}
Eduardo H.~M. Pena, Eduardo~Cunha de~Almeida, and Felix Naumann.
\newblock Fast detection of denial constraint violations.
\newblock \emph{Proc. {VLDB} Endow.}, 15\penalty0 (4):\penalty0 859--871, 2021.
\newblock \doi{10.14778/3503585.3503595}.
\newblock URL \url{https://www.vldb.org/pvldb/vol15/p859-pena.pdf}.

\bibitem[Qin et~al.(2024)Qin, Huang, Wang, Zhu, Zhang, Miao, Mao, Onizuka, and
  Xiao]{DBLP:conf/icde/QinHWZZM0O024}
Jianbin Qin, Sifan Huang, Yaoshu Wang, Jing Zhu, Yifan Zhang, Yukai Miao, Rui
  Mao, Makoto Onizuka, and Chuan Xiao.
\newblock Bclean: {A} bayesian data cleaning system.
\newblock In \emph{40th {IEEE} International Conference on Data Engineering,
  {ICDE} 2024, Utrecht, The Netherlands, May 13-16, 2024}, pages 3407--3420.
  {IEEE}, 2024.
\newblock \doi{10.1109/ICDE60146.2024.00263}.
\newblock URL \url{https://doi.org/10.1109/ICDE60146.2024.00263}.

\bibitem[Rammelaere and Geerts(2018)]{10.14778/3236187.3236193}
Joeri Rammelaere and Floris Geerts.
\newblock Explaining repaired data with cfds.
\newblock \emph{Proc. VLDB Endow.}, 11\penalty0 (11):\penalty0 1387–1399,
  July 2018.
\newblock ISSN 2150-8097.
\newblock \doi{10.14778/3236187.3236193}.
\newblock URL \url{https://doi.org/10.14778/3236187.3236193}.

\bibitem[Rezig et~al.(2021)Rezig, Ouzzani, Aref, Elmagarmid, Mahmood, and
  Stonebraker]{DBLP:journals/pvldb/RezigOAEMS21}
El~Kindi Rezig, Mourad Ouzzani, Walid~G. Aref, Ahmed~K. Elmagarmid, Ahmed~R.
  Mahmood, and Michael Stonebraker.
\newblock Horizon: Scalable dependency-driven data cleaning.
\newblock \emph{Proc. {VLDB} Endow.}, 14\penalty0 (11):\penalty0 2546--2554,
  2021.
\newblock \doi{10.14778/3476249.3476301}.
\newblock URL \url{http://www.vldb.org/pvldb/vol14/p2546-rezig.pdf}.

\bibitem[Sun et~al.(2020)Sun, Song, Wang, and Wang]{DBLP:conf/icde/SunSW020}
Yu~Sun, Shaoxu Song, Chen Wang, and Jianmin Wang.
\newblock Swapping repair for misplaced attribute values.
\newblock In \emph{36th {IEEE} International Conference on Data Engineering,
  {ICDE} 2020, Dallas, TX, USA, April 20-24, 2020}, pages 721--732. {IEEE},
  2020.
\newblock \doi{10.1109/ICDE48307.2020.00068}.
\newblock URL \url{https://doi.org/10.1109/ICDE48307.2020.00068}.

\bibitem[Sun et~al.(2024)Sun, Song, and Yuan]{DBLP:journals/tkde/SunSY24}
Yu~Sun, Shaoxu Song, and Xiaojie Yuan.
\newblock From minimum change to maximum density: On determining near-optimal
  s-repair.
\newblock \emph{{IEEE} Trans. Knowl. Data Eng.}, 36\penalty0 (2):\penalty0
  627--639, 2024.
\newblock \doi{10.1109/TKDE.2023.3294401}.
\newblock URL \url{https://doi.org/10.1109/TKDE.2023.3294401}.

\bibitem[Tsamardinos et~al.(2006)Tsamardinos, Brown, and
  Aliferis]{DBLP:journals/ml/TsamardinosBA06}
Ioannis Tsamardinos, Laura~E. Brown, and Constantin~F. Aliferis.
\newblock The max-min hill-climbing bayesian network structure learning
  algorithm.
\newblock \emph{Mach. Learn.}, 65\penalty0 (1):\penalty0 31--78, 2006.
\newblock \doi{10.1007/S10994-006-6889-7}.
\newblock URL \url{https://doi.org/10.1007/s10994-006-6889-7}.

\bibitem[Tzoumas et~al.(2011)Tzoumas, Deshpande, and
  Jensen]{DBLP:journals/pvldb/TzoumasDJ11}
Kostas Tzoumas, Amol Deshpande, and Christian~S. Jensen.
\newblock Lightweight graphical models for selectivity estimation without
  independence assumptions.
\newblock \emph{Proc. {VLDB} Endow.}, 4\penalty0 (11):\penalty0 852--863, 2011.
\newblock URL \url{http://www.vldb.org/pvldb/vol4/p852-tzoumas.pdf}.

\bibitem[Visengeriyeva and Abedjan(2018)]{visengeriyeva2018metadata}
Larysa Visengeriyeva and Ziawasch Abedjan.
\newblock Metadata-driven error detection.
\newblock In \emph{Proc. Int. Conf. Scientific, Stat. Database Manage. SSDBM},
  pages 1--12, 2018.

\bibitem[Wang and Tang(2014)]{DBLP:conf/sigmod/WangT14}
Jiannan Wang and Nan Tang.
\newblock Towards dependable data repairing with fixing rules.
\newblock In Curtis~E. Dyreson, Feifei Li, and M.~Tamer {\"{O}}zsu, editors,
  \emph{International Conference on Management of Data, {SIGMOD} 2014,
  Snowbird, UT, USA, June 22-27, 2014}, pages 457--468. {ACM}, 2014.
\newblock \doi{10.1145/2588555.2610494}.
\newblock URL \url{https://doi.org/10.1145/2588555.2610494}.

\bibitem[Zhang et~al.(2023)Zhang, Hu, Zong, Li, and
  Xia]{DBLP:conf/apweb/ZhangHZLX23}
Anzhen Zhang, Shengji Hu, Chuanyu Zong, Jiajia Li, and Xiufeng Xia.
\newblock Computing maximal likelihood subset repair for inconsistent data.
\newblock In Xiangyu Song, Ruyi Feng, Yunliang Chen, Jianxin Li, and Geyong
  Min, editors, \emph{Web and Big Data - 7th International Joint Conference,
  APWeb-WAIM 2023, Wuhan, China, October 6-8, 2023, Proceedings, Part {II}},
  volume 14332 of \emph{Lecture Notes in Computer Science}, pages 1--15.
  Springer, 2023.
\newblock \doi{10.1007/978-981-97-2390-4\_1}.
\newblock URL \url{https://doi.org/10.1007/978-981-97-2390-4\_1}.

\end{thebibliography}







        
	


\end{document}